\pgfplotsset{compat=1.14}
\newcommand{\ot}{\leftarrow}
\newcommand{\argmax}{\mathop{\rm arg\,max}}
\newcommand{\Ch}{\mathop{\mathrm{Ch}}\nolimits}
\newcommand{\upperW}{\overline{w}}
\newcommand{\lowerW}{\underline{w}}
\newenvironment{modified}{}{}
\newtheorem{theorem}{Theorem}
\newtheorem{lemma}{Lemma}
\newtheorem{definition}{Definition}
\newtheorem{example}{Example}
\providecommand{\keywords}[1]
{
  \small	
  \textbf{\textit{Keywords---}} #1
}
\providecommand{\jel}[1]
{
  \small	
  \textbf{\textit{JEL Classification---}} #1
}
\begin{document}

\title{Near-Feasible Stable Matchings with Budget Constraints
}

\author{{\bf Yasushi Kawase}\\
  {\normalsize Tokyo Institute of Technology and} \\
  {\normalsize RIKEN AIP Center, Tokyo, Japan.}\\
  {\normalsize \texttt{kawase.y.ab@m.titech.ac.jp}}
  \and 
  {\bf Atsushi Iwasaki}\\
  {\normalsize University of Electro-Communications and} \\
  {\normalsize RIKEN AIP Center, Tokyo, Japan.}\\
  {\normalsize \texttt{iwasaki@is.uec.ac.jp}}
}

\date{}
\maketitle

\begin{abstract}
  We consider the matching with contracts framework of Hatfield and Milgrom~\cite{Hatfield:AER:2005} when
  one side (a firm or hospital) can make monetary transfers (offer wages) to the other (a worker or doctor). 
  In a standard model, monetary transfers are not restricted. However, we assume that each hospital has a {\em fixed budget}; that is, 
  the total amount of wages allocated by each hospital to the doctors is constrained. 
  With this constraint, stable matchings may fail to exist and checking for the existence is hard. 
  To deal with the nonexistence,
  we focus on \textit{near-feasible} matchings that can exceed each hospital budget by a certain amount, and 
  we introduce a new concept of \textit{compatibility}. We show that the compatibility condition is 
  a sufficient condition for the existence of a near-feasible stable matching in the matching with contracts framework. 
  Under a slight restriction on hospitals' preferences, we provide mechanisms that efficiently return
  a near-feasible stable matching with respect to the actual amount of wages allocated by each hospital. 
  By sacrificing strategy-proofness, the best possible bound of budget excess is achieved. 

\end{abstract}
\keywords{Stable Matchings; Budget Constraints; Strategy-Proofness; Approximation Algorithms}\\
\jel{C78, D47}

\section{Introduction}
This paper studies a two-sided, one-to-many matching model
where there are budget constraints on one side (a firm or hospital), that is, 
the total amount of wages that it can pay to the other side (a worker or doctor) is limited.
The theory of two-sided matching has been extensively developed, as is illustrated by
the comprehensive surveys of Roth and Sotomayor~\cite{Roth:CUP:1990} or Manlove~\cite{manlove:2013}. 
Rather than fixed budgets, maximum quotas limiting 
the total number of doctors that each hospital can hire are typically used. 
Some real-world examples are subject to the matching problem with budget constraints: 
a college can offer stipends to recruit better students when the budget for admission is limited; 
a firm can offer wages to workers under the condition that employment costs depend on earnings in the previous accounting period; 
a public hospital can offer salaries to doctors in the case where the total amount relies on funds from the government; and so on. 
To establish our model and concepts, we use doctor-hospital matching as a running example. 

To date, most papers on matching with monetary transfers assume that budgets are unrestricted, e.g.,~\cite{kelso:ecma:1982}. 
When they are restricted, stable matchings may fail to exist~\cite{mongell:el:1986,abizada:te:2016}.
In fact, Abizada~\cite{abizada:te:2016} considers a subtly different model from ours 
and shows that (coalitional) stable matchings, 
where groups of doctors and hospitals have no profitable deviations, may not exist.%
 
There are several possibilities to circumvent the nonexistence problem. 
Abizada~\cite{abizada:te:2016} modifies the notion of stability and
proposes a variant of the deferred acceptance mechanism that produces
a pairwise, instead of coalitional, stable matching, and is strategy-proof for doctors. 
Dean, Goemans, and Immorlica~\cite{dean:ifip:2006}
examine a problem similar to ours, 
restricting each hospital to having only a \textit{lexicographic} utility. 
Nguyen and Vohra~\cite{nguyen:ec:2015} examine matchings with couples 
where joint preference lists over pairs of hospitals are submitted, 
e.g., \cite{
  kojima:qje:2013,perrault:aamas:2016}.
They then develop an algorithm that outputs a \textit{near-feasible} stable matching, 
in which the number of doctors assigned to each hospital
differs (up or down) from the actual maximum quota by at most three.

We focus on near-feasible matchings on a different model from Nguyen and Vohra~\cite{nguyen:ec:2015} 
that exceeds the budget of each hospital by a certain amount. 
This idea can be interpreted as one in which, for each instance of a matching problem, 
our mechanisms find a ``nearby'' instance with a stable matching. 
However, it must be emphasized that Nguyen and Vohra 
discuss no strategic issue, i.e., misreporting a doctor's preference may be profitable. 
The literature on matching has found strategy-proofness for doctors, that is, 
no doctor has an incentive to misreport his or her preference, 
to be a key condition in a wide variety of settings~\cite{Abdulkadiroglu:AER:2003}.

The stability and strategy-proofness on matching mechanisms with/without monetary transfers
is characterized by the \emph{generalized Deferred Acceptance (DA) mechanism} on a ``matching with contracts'' model~\cite{Hatfield:AER:2005}. 
When budgets are unrestricted, if a mechanism---the choice function of every hospital---satisfies 
three conditions, specifically, \textit{substitutability}, \textit{irrelevance of rejected contracts}, and \textit{law of aggregate demand},
then it always finds a ``stable'' allocation and is strategy-proof for doctors. 
What happens if they are restricted? 
No choice function that simply chooses the most preferred contracts does not satisfy the substitutability condition. This coincides with the fact that no stable matching exists in the presence of budget constraints. With such a choice function, a hospital can form a blocking coalition to increase its utility. Besides, without violating budget constraints, no choice function cannot be designed so that it satisfies the substitutability condition and outputs the best set among the doctors offered to each hospital via the generalized DA process. We treat such constraints as ``soft'' and characterize the stability and strategy-proofness on near-feasible matching mechanisms. To this end, we introduce a new additional condition, which we call \textit{compatibility}, on the matching with contract model.

The compatibility condition ensures that a choice function chooses the best set of contracts among the offered contracts under a certain capacity that is greater than or equal to the predefined capacity. We show that the four conditions: substitutability; irrelevance of rejected contracts; law of aggregate demand; and compatibility are the sufficient conditions for choice functions so that, for any instance of matching problems with budget constraints, there is a nearby instance guaranteed to have a stable matching. Note that the compatibility condition does not specify how much a hospital may exceed its budget and just suggests a direction for choice functions for the desirable conditions. It is still challenging that we construct a choice function that satisfies the four conditions without violating the budget constraints too much.

Alternatively, from a practical point of view, we need to compute choice functions efficiently. 
However, computing each hospital's choice function under a certain budget is NP-hard because it is equivalent to solving a knapsack problem. 
In fact, 
McDermid and Manlove~\cite{mcdermid2010} 
consider matchings with inseparable couples. 
In this problem, there are single doctors and couples. 
Every single doctor and couple has a preference over hospitals
and each hospital has a maximum quota.
Their model is contained in ours as a special case. 
They prove that deciding whether a stable matching exists or not is NP-hard (Non-deterministic Polynomial-time hard). 
In contrast, the choice functions we develop can be implemented to run fast via consulting Dantzig's greedy algorithm~\cite{dantzig1957}. 


Our analysis is general particularly because each hospital's priority is represented by the weak relation over the subsets of contracts as many papers have done, e.g.,~\cite{kojima:qje:2013}. It would be potentially useful when you consider constraints with a similar structure to budgets, and practically useful in designing matching mechanisms to work. To see this point, we develop two mechanisms that can be implemented to run fast: 
one is strategy-proof for doctors and the other is not. 
Our first mechanism outputs a stable matching that exceeds 
each budget by a factor at most the maximum wage each hospital offers over the minimum one. 
Our second mechanism, by sacrificing strategy-proofness, improves the violation bound; 
that is, it produces a stable matching that exceeds 
each budget up to the maximum wage each hospital offers. This is the best possible bound.
The key procedure of our mechanisms, i.e., choice functions of hospitals, is sorting contracts in the decreasing order of the hospital's utility per wage. We thus allow each hospital to have an \textit{additive} utility over contracts. Note that we could build a mechanism when each hospital has a weak relation, but the bound may reach an unreasonable amount. 
Furthermore, we find two special cases such that a slight modification of 
the second mechanism recovers strategy-proofness, keeping the best possible bound.


Our model assumes that the amount of predefined budgets is flexible up to a certain amount. 
There is certainly some realistic situation where this assumption is justified. 
Indeed, in firm-worker matchings,
if a firm receives an application from a worker who is appropriate for the business, 
the CEO would agree to increase the employment cost.
In doctor-hospital matchings, 
hospitals can create an association that pools funds in advance and subsidizes 
the expense of salaries according to matching results. 
Even when budgets must not be exceeded, 
our mechanisms can work by decreasing budgets in a certain amount in advance. 
Our second mechanism produces a near-feasible matching that does not exceed the predefined budgets. 
 
Alternatively, recall that our model contains a matching problem with inseparable couples~\cite{mcdermid2010} in which two doctors of each couple must be assigned to the same hospital. 
We can consider this problem as ours with just two possible wages, e.g., $1$ or $2$, and budgets corresponding to maximum quotas. 
Our second mechanism outputs a stable matching that exceeds each maximum quota up to $1$. 
Notice that the matching mechanism of Nguyen and Vohra~\cite{nguyen:ec:2015} is not easily extended to our setting 
since it is designed for \textit{separable} couples. Even if it could be extended, their mechanism is computationally hard
because it relies on Scarf's Lemma, whose computational version is PPAD-complete~\cite{kintali::2008}
(Polynomial Parity Arguments on Directed graphs). 
Thus, the mechanism is unlikely to be implemented to run fast, particularly when many possible contracts are offered. 

Let us briefly explore some remaining related literature. 
Matching with constraints is prominent across computer science and economics~\cite{kamada:aer:2017}. 
In many application domains, various distributional constraints are 
often imposed on an outcome, e.g., regional maximum quotas are imposed 
on hospitals in urban areas to allocate more doctors to
rural areas~\cite{kamada:aer:2015} or minimum quotas are imposed when 
school districts require that 
at least a certain number of students are allocated to each school
to enable these school to operate properly%
~\cite{biro:tcs:2010,fragiadakis::2012,Goto:aamas:2014,aziz:aamas:2019}.
%
Yet another type of distributional constraint involves  
\textit{diversity constraints} in school choice programs~\cite{Abdulkadiroglu:AER:2003,kojima2012school,hafalir2013effective,ehlers::2012}. 
They are implemented to give students/parents 
an opportunity to choose which public school to attend. 
However, a school is required to balance its composition of students,
typically in terms of socioeconomic status.
Controlled school choice programs must provide choices
while maintaining distributional constraints. 
Moreover, refugee settlement~\cite{delacretaz::2016} is an important application where a government assigns refugee families to some facilities it prepares. It is considered as a generalization of school choice because it needs to consider and satisfy several properties of each family. Those properties, such as member composition and status of children, are represented as 
(multi-dimensional) knapsack constraints. 
Although the complexity analysis is provided~\cite{aziz:aamas:2018}, it does not focus on near-feasible matchings. 
%
Finally, note that there is a certain amount of recent studies 
on two-sided matchings in the artificial intelligence and multi-agent systems community, 
although this literature has been established in the field across algorithms and economics. 
Drummond and Boutilier~\cite{drummond:ijcai:2013,drummond:aaai:2014} 
examine preference
elicitation procedures for two-sided matching.
In the context of \textit{mechanism design},
Hosseini, Larson, and Cohen~\cite{hosseini:aaai:2015} 
consider a mechanism for a situation in which 
agent preferences dynamically change.
Kurata et al.~\cite{kurata:jair:2017} 
explore strategy-proof mechanisms
for affirmative action in school choice programs (diversity constraints), while
Goto et al.~\cite{Goto:aij:2016} 
handle regional constraints, e.g., imposing 
regional minimum/maximum quotas on hospitals in urban areas 
so that more doctors are allocated to rural areas. 

The rest of this paper is organized as follows: 
In the next section, 
we set forth the notations and basic concepts for our two-sided matching model.
Section~\ref{sec:impossibility} provides impossibility and intractability results
and Section~\ref{sec:com} introduces a new condition
for when each hospital faces a budget constraint. 
Section~\ref{sec:mechs} develops two mechanisms that can be implemented to run fast: 
one is strategy-proof for doctors and the other is not. 
We further examine two special cases 
such that a slight modification of the second mechanism recovers
strategy-proofness, while maintaining the best possible bound. 
Finally, 
we present our conclusion in Section~\ref{sec:conclusion}. 

\section{Preliminaries}
\label{sec:preliminaries}
This section describes a model for two-sided matchings with budget constraints. 
A market is a tuple $(D,H,X,\succ_D,\succsim_H,B_H)$, where each component is defined as follows: 
There is a finite set of doctors $D=\{d_1,\ldots,d_n\}$ and a finite set of hospitals $H=\{h_1,\ldots,h_m\}$. 
Let $X\subseteq D\times H \times \mathbb{R}_{++}$ denote a finite set of contracts, where each contract $x\in X$ is of the form $x=(d,h,w)$. 
Here, $\mathbb{R}_{++}$ is the set of positive real numbers. 
A contract means that hospital $h\in H$ offers wage $w\in\mathbb{R}_{++}$ to doctor $d$. 
A hospital can choose a wage freely within $\mathbb{R}_{++}$ and 
can offer a doctor multiple contracts with different wages. 
Each contract is acceptable for each hospital~$h$. 
Furthermore, for any subset of contracts $X'\subseteq X$, 
let $X'_d$ denote $\{(d',h',w')\in X' \mid d'=d\}$ and $X'_h$ denote $\{(d',h',w')\in X' \mid h'=h\}$. 
We use the notations $x_D$, $x_H$, and $x_W$ to describe 
the doctor, the hospital, and the wage associated with a contract $x\in X$, respectively. 

Let $\succ_D=(\succ_d)_{d\in D}$ denote the doctors' preference profile,
where $\succ_d$ is the strict relation of $d\in D$ over $X_d\cup\{\emptyset\}$, that is, 
$x\succ_d x'$ means that $d$ strictly prefers $x$ to $x'$. $\emptyset$ indicates a \textit{null} contract. 

Let $\succsim_H=(\succsim_h)_{h\in H}$ denote the hospitals' preference profile,
where $\succsim_h$ is the weak relation of $h\in H$ over the subsets of contracts that includes at most one contract for each doctor, i.e., $\mathcal{X}_h\coloneqq \{X'\subseteq X_h\mid |X'_d|\le 1~(\forall d\in D)\}$.
Let $\succ_h$ and $\sim_h$ be strict and indifferent preference relation of $h$, respectively.
We call a market is \emph{additive} if the preference $\succsim_h$, for each $h\in H$, can be represented by an \emph{additive utility function} $f_h\colon \mathcal{X}_h\to\mathbb{R}_{+}$, that is,
(i) $X'\succsim X''$ if and only if $f_h(X')\ge f_h(X'')$ for any $X', X''\in \mathcal{X}_h$ and
(ii) $f_h(X')=\sum_{x\in X'}f_h(\{x\})$ holds for any $X'\in \mathcal{X}_h$.
In what follows, we simply denote $f_h(\{x\})$ by $f_h(x)$.



Each hospital $h$ has a fixed budget $B_h\in \mathbb{R}_{++}$ that it can distribute as wages to its admitted doctors. 
Let $B_H=(B_h)_{h\in H}$ be the budget profile. 
We assume that, for any contract $(d,h,w)$, $0< w \leq B_h$ holds. 
Given $X$, let 
\begin{align*}
 \lowerW_h=\min_{x\in X_h} x_W \quad\text{and}\quad \upperW_h=\max_{x\in X_h} x_W.
\end{align*}
Moreover, we use the notation $w_h(X')$ for any $X'\subseteq X$ to denote the total wage that $h$ offers in $X'$,
i.e., $\sum_{x\in X_h'}x_W$.

We call a subset of contracts $X'\subseteq X$ a \emph{matching} if $|X'_d| \le 1$ for all $d\in D$.
A matching $X'\subseteq X$ is \emph{$B'_H$-feasible} if
$w_h(X')\le B'_h$ for all $h \in H$. 
%
%
Given a matching $X'$, 
another matching $X''\in \mathcal{X}_h$ for hospital $h$ is a \textit{blocking} set (or coalition) if 
$X''\succ_{x_D} X'$ for all doctors $x_D$ of $x\in X''\setminus X'$, 
$X''\succ_h X_h'$, and 
$w_h(X'')\le B'_h$.
If such $X''$ exists, we say $X''$ blocks $X'$. 
Then we obtain a stability concept. 
\begin{definition}[$B'_H$-stability]\label{def:stability}
  We say a matching $X'$ is $B'_H$-\emph{stable} if
  it is $B'_H$-feasible and there exists no blocking coalition. 
\end{definition}


As we will see, 
when 
no hospital is allowed to violate the given constraints $B_H$, 
conventional stable matchings ($B'_H=B_H$) may not exist. 
Definition~\ref{def:stability}, 
for example, allows a central planner to add or redistribute the budgets 
and this planner finds a problem instance with $B'_H(\geq B_H)$, whose $B'_H$-stable matching is guaranteed to exist. 
If each contract has the same amount of wage $w$, we can regard a budget constraint $B_h$ 
for each hospital $h$ as its maximum quota of $\lfloor B_h/w\rfloor$.
Note also that a matching $X'\subseteq X$ is \textit{doctor-optimal} if $X'_d \succeq_{d} X''_d$ for all $d\in D$ and all $X''\subseteq X$. 

A mechanism is a function that takes a profile of 
doctor preferences as input and returns matching $X'$. 
We say a mechanism is stable if
it always produces a $B_H'$-stable matching for certain $B_H'$. 
We also say a mechanism is \emph{strategy-proof} for doctors if
no doctor ever has any incentive to misreport her preference,
regardless of what the other doctors report.

Next, we briefly describe a class of mechanisms called 
the generalized DA mechanism~\cite{Hatfield:AER:2005} and its conditions. 
This mechanism uses choice functions 
$\Ch_D\colon 2^X \rightarrow 2^X$ and 
$\Ch_H\colon 2^X \rightarrow 2^X$. 
For each doctor $d$, its choice function $\Ch_d(X')$ chooses $\{x\}$,
where $x=(d,h,w)\in X'_d$ such that
$x$ is the most preferred contract within $X'_d$
(we assume $\Ch_d(X')=\emptyset$ if $\emptyset\succ_d x$ for all $x\in X'_d$).
Then, the choice function of all doctors is given as:
$\Ch_D(X') \coloneqq \bigcup_{d \in D} \Ch_d(X_d')$.
Similarly, the choice function of all hospitals $\Ch_H(X')$ is
$\bigcup_{h \in H} \Ch_h(X_h')$, where $\Ch_h$ is a choice function of $h$.
There are alternative ways to define the choice function of each hospital $\Ch_h$. 
As we discuss later, the mechanisms considered in this paper are expressed 
by the generalized DA with different formulations of $\Ch_H$. 
Formally, the generalized DA is given as Algorithm~\ref{alg:generalizedDA}.
\begin{algorithm}[h]
  \SetKwInOut{Input}{input}\Input{$X,\Ch_D,\Ch_H$\quad\textbf{output:} matching $X'\subseteq X$}
  \caption{Generalized DA}\label{alg:generalizedDA}
  $R^{(0)}\ot\emptyset$\;
  \For{$i=1,2,\dots$}{
    $Y^{(i)}\ot \Ch_D(X\setminus R^{(i-1)})$, $Z^{(i)}\ot \Ch_H(Y^{(i)})$\;
    $R^{(i)}\ot R^{(i-1)}\cup(Y^{(i)}\setminus Z^{(i)})$\;
    \lIf{$Y^{(i)}=Z^{(i)}$}{\Return $Y^{(i)}$}
  }
\end{algorithm}
  
\noindent Here, $R^{(i)}$ is a set of rejected contracts at the $i$th iteration. 
Doctors cannot choose contracts in $R^{(i)}$. 
Initially, $R^{(0)}$ is empty. 
Thus, each doctor can choose her most preferred contract. 
The chosen set by doctors is $Y^{(i)}$.
Hospitals then choose $Z^{(i)}$, which is a subset of $Y^{(i)}$. 
If $Y^{(i)}=Z^{(i)}$, i.e., no contract is rejected by the hospitals, 
the mechanism terminates. 
Otherwise, it updates $R^{(i)}$ and repeats the same procedure. 

Hatfield and Milgrom \cite{Hatfield:AER:2005} define 
a notion of stability, which we refer to as \textit{HM-stability}. 
\begin{definition} \textsc{(HM-stability)}
  A matching $X'\subseteq X$ is said to be HM-stable if
  (i) $X'$ satisfies $X'=\Ch_D(X')=\Ch_H(X')$ and 
  (ii) there is no hospital $h$ and set of contracts $X'' \ne \Ch_h(X_h')$ such that $X''=\Ch_h(X_h'\cup X'')\subseteq \Ch_D(X'\cup X'')$.
%
\end{definition}
%
HM-stability unifies stability concepts that are designed for 
each context of (standard) matching problems without constraints. 
Indeed, it implies $B_H$-stability if we require 
the choice functions of hospitals to strictly satisfy budget constraints $B_H$. 
%
Let us next observe the conditions for $\Ch_H$. 
\begin{definition} \textsc{(Substitutability, SUB)}
  For any $X',\allowbreak X''\subseteq X$ with $X''\subseteq X'$,
  $X''\setminus \Ch_H(X'')\subseteq X' \setminus \Ch_H(X')$. 
  Specifically, this condition requires that 
  if contract $x$ is rejected in $X''$, then it is also rejected
  when more contracts are added to~$X''$.
\end{definition}

\begin{definition} \textsc{(Irrelevance of rejected contracts, IRC)}
  For any $X'\subseteq X$ and $X''\subseteq X\setminus X'$,
  $\Ch_H(X')=\Ch_H(X'\cup X'')$ holds if $\Ch_H(X'\cup X'')\subseteq X'$. 
  Thus, this condition requires that, when adding $x$ to $X'$, if
  $x$ is not accepted, then $x$ does not affect the outcomes of other 
  contracts in $X'$.
\end{definition}

\begin{definition} \textsc{(Law of aggregate demand, LAD)}
  For any $X',X''$ with $X''\subseteq X'\subseteq X$, $|\Ch_H(X'')|\leq|\Ch_H(X')|$. 
  In other words, this condition requires that the number of accepted contracts
  increases weakly when more contracts are added. 
\end{definition}
 
Hatfield and Milgrom~\cite{Hatfield:AER:2005} proved that, if $\Ch_H$ satisfies SUB and IRC, 
the generalized DA always produces a matching that is HM-stable. 
If $\Ch_H$ further satisfies LAD, it is strategy-proof for doctors. 


\section{Impossibility and Intractability}
\label{sec:impossibility}
When no hospital is allowed to violate the given constraints, 
stable matchings may not exist.
Let us describe an example in which no stable matching exists~\cite{mongell:el:1986,mcdermid2010,abizada:te:2016}. 
\begin{example}\label{ex:nonexistence}
  Consider a market with three doctors $D=\{d_1,d_2,d_3\}$ and two hospitals $H=\{h_1,h_2\}$. 
  Hospital $h_1$ offers wage $9$, e.g., nine hundred thousand dollars, to doctor $d_1$, and $6$ and $4$ to $d_2$ and $d_3$, 
  while $h_2$ offers $6$ and $4$ to $d_2$ and $d_3$, respectively. Then, the set of offered contracts $X$ is 
  \[
  \{(d_1,h_1,9),(d_2,h_1,6),(d_3,h_1,4),(d_2,h_2,6),(d_3,h_2,4)\}.
  \]
  The doctors' preferences are given as follows: 
  \begin{align*}
    & \succ_{d_1}:~ (d_1,h_1,9), \\
    & \succ_{d_2}:~ (d_2,h_1,6)\succ_{d_2} (d_2,h_2,6),\\ 
    & \succ_{d_3}:~ (d_3,h_2,4)\succ_{d_3} (d_3,h_1,4). 
  \end{align*}
  Next, assume that each hospital $h_i$ simply prefers the set of contracts $X'\subseteq X_{h_i}$ to the set of contracts $X''\subseteq X_{h_i}$ if and only if the total amount of wages in $X'$ is larger than that of $X''$.
  For example, we have $\{ (d_2,h_1,6), (d_3,h_1,4)\} \succ_{h_1} \{(d_1,h_1,9)\} \succ_{h_1} \{(d_2,h_1,6)\}$.
Finally, hospital $h_1$ has a fixed budget of $B_{h_1}=10$ and $h_2$ has $B_{h_2}=6$. We omit the set of contracts beyond these budgets. 

  We claim that there exists no $B_H$-stable matching in this situation. 
  First, assume $d_1$ is assigned to $h_1$ with wage $9$. 
  No more doctors are assigned to this hospital due to the budget constraint. 
  If $d_2$ is assigned to $h_2$, $\{(d_2,h_1,6), (d_3,h_1,4)\}$ is a blocking coalition 
  because $d_2$ prefers $h_1$ to $h_2$, $d_3$ prefers $h_1$ to being unmatched, and 
  $h_1$ prefers $\{(d_2,h_1,6), (d_3,h_1,4)\}$ to $\{(d_1,h_1,9)\}$. 
  If $d_3$ is assigned to $h_2$, $(d_2,h_2,6)$ blocks this matching. 

  Second, assume $d_1$ is not assigned to $h_1$. 
  If $d_2$ and $d_3$ are simultaneously assigned to $h_1$, 
  $d_3$ prefers $h_2$ to $h_1$ and $h_2$ prefers $d_3$ to being unmatched. 
  If they are assigned to different hospitals, respectively, 
  $(d_1,h_1,9)$ blocks this matching regardless of which doctor is assigned to $h_1$. 
  For the remaining cases, since either hospital is being unmatched, 
  some contract or coalition always blocks the matching. 
\end{example}
%
 
The impossibility result raises the issue of the complexity of deciding the 
existence of a $B_H$-stable matching.
McDermid and Manlove~\cite{mcdermid2010} 
considered a special case of our model and proved NP-hardness. 
Hamada \textit{et al.}~\cite{HISY2017} examined a similar model
to ours and Abizada's~\cite{abizada:te:2016} 
and proved that the existence problem is $\Sigma_2^P$-complete.

To deal with the nonexistence of stable matchings, 
we focus on a near-feasible matching that exceeds each budget by a certain amount. 
For each instance of a matching problem, 
our mechanisms find a nearby instance with a stable matching. 
The following theorem implies that, 
to obtain a stable matching,
at least one hospital $h$ needs to increase its budget by nearly $\upperW_h$. 
\begin{theorem}\label{thm:lowerapprox}
  For any positive reals $\alpha<\beta<1$,
  there exists a market $(D,H,X,\succ_D,\succsim_H,B_H)$ such that
  $\upperW_h\le \beta\cdot B_h$ and
  no stable matching exists in any inflated market $(D,H,X,\succ_D, \succsim_H,B_H')$ if $B_h\le B_h'\le (1+\alpha) B_h$ for all $h\in H$. 
\end{theorem}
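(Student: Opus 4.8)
The plan is to build, for each pair $\alpha<\beta<1$, a market that amplifies Example~\ref{ex:nonexistence} so that (i) the maximum wage at each hospital equals $\beta$ times that hospital's budget and (ii) the cyclic non-existence argument survives multiplying every budget by $1+\alpha$. Amplification is necessary because the bare example of Section~\ref{sec:impossibility} already admits a stable matching once $B_{h_1}$ is raised from $10$ to $13$ (and another once $B_{h_2}$ is raised enough), so it only certifies the theorem for $\alpha$ bounded well below $\beta$. To reach every $\alpha<\beta<1$ one must arrange that the extra room any single hospital would need in order to break the cycle is essentially a full additional maximum-wage contract, i.e.\ close to $\beta B_h$, which exceeds the $\alpha B_h$ of slack that inflation provides.

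Concretely I would keep two hospitals $h_1,h_2$ and take a doctor $e$ whose only acceptable contract is $(e,h_1,\beta B_{h_1})$ --- this pins $\upperW_{h_1}=\beta B_{h_1}$ --- together with a family $d_1,\dots,d_k$ of ``rotation'' doctors, each holding one contract with $h_1$ and one with $h_2$, whose wages and preference lists generalize the roles of $d_2,d_3$ and the budget of $h_2$ in Example~\ref{ex:nonexistence}. The integer $k$ and the exact wages would be chosen as functions of $\alpha,\beta$ precisely so that $h_1$ cannot simultaneously honour $(e,h_1,\cdot)$ and enough rotation contracts to stabilize the cycle even at budget $(1+\alpha)B_{h_1}$, and $h_2$ at budget $(1+\alpha)B_{h_2}$ still cannot absorb the rotation doctors in a cycle-resolving way; at the end all wages are scaled so that each contract's wage is at most $\beta$ times the budget of the hospital it names.

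To prove non-existence I would fix an arbitrary $B'_H$ with $B_h\le B'_h\le(1+\alpha)B_h$ and an arbitrary $B'_H$-feasible matching $M$, and exhibit a blocking coalition via a case analysis mirroring Example~\ref{ex:nonexistence}. First, if $e$ is unmatched under $M$ then, since $h_1$'s wage load under $M$ is always sufficiently far below $B'_{h_1}$ (the rotation contracts at $h_1$ cannot by themselves fill $h_1$ up under a mere $1+\alpha$ inflation), the contract $(e,h_1,\beta B_{h_1})$ --- together, if $h_1$ is nearly full, with a suitably sized droppable subset of $h_1$'s current contracts --- blocks $M$; this uses that $\beta B_{h_1}$ is large and that $k$ is large enough for the required subset to exist. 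Second, once $e$ is matched (hence to $h_1$), $h_1$ has at most $(1+\alpha-\beta)B_{h_1}<B_{h_1}$ of its budget free, and the rotation argument of Example~\ref{ex:nonexistence} applies: a rotation doctor that $M$ places at $h_2$ or leaves unmatched but that prefers $h_1$, bundled with the other rotation doctors $h_1$ is ``missing,'' forms a set $h_1$ strictly prefers and can afford; a rotation doctor placed at $h_1$ against its preference is lured away by $h_2$; and the deficit $(\beta-\alpha)B_{h_1}>0$ is exactly what prevents $h_1$ from absorbing enough rotation doctors alongside $e$ to close the cycle. Assembling these cases over all $M$ and $B'_H$ gives the theorem.

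I expect the main obstacle to be designing the rotation family --- the value of $k$ and the precise wage vector --- so that the second step really is robust: for every admissible $B'_H$ one must rule out every ``escape'' in which $h_1$ or $h_2$ spends a little extra budget to hold one more contract and thereby parks the under-placed rotation doctors with no available blocking coalition --- in particular one must ensure no rotation doctor can be left unmatched and unable to block, which is the failure mode that makes the bare example non-robust. This forces the rotation family to be strictly richer than $\{d_2,d_3\}$, with its size growing as $\alpha\to\beta$, and the verification becomes an exhaustive but elementary check over finitely many configurations, complicated mainly by having to track the integrality of the subset sizes appearing in the blocking coalitions.
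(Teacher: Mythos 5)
Your proposal does not actually prove the theorem: the entire difficulty lies in exhibiting the market, and that is precisely the step you leave unspecified (``the integer $k$ and the exact wages would be chosen \ldots so that the second step really is robust''), which you yourself flag as the main obstacle. Beyond incompleteness, the two-hospital architecture you commit to faces a structural barrier that the paper's proof avoids by using a \emph{growing} number of hospitals, namely $m>1/(\beta-\alpha)+1/(1-\beta)$. In the paper's construction there is one ``collector'' hospital $h_m$ and, crucially, one \emph{private} overflow hospital $h_i$ for each rotation index $i$: the dichotomy is that each index either contributes a small wage-$1/m$ contract to $h_m$ (so that $h_m$ accumulates $(m-1)/m+\beta>1+\alpha$), or else a \emph{single} defecting pair $(d_i^0,d_i^m)$ together with filler doctors $Y_i$ who exist only at $h_i$ forces $h_i$ up to wage $1+\beta>1+\alpha$. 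When all rotation doctors share one hospital $h_1$, as in your plan, a single defector no longer produces an overload there: to exceed $(1+\alpha)B_{h_1}$ with $\alpha$ arbitrarily close to $\beta$ you must force $e$ \emph{plus} rotation contracts totalling nearly $B_{h_1}$ to sit at $h_1$ simultaneously, and each of those additionally forced contracts reopens an escape route (it can instead sit at $h_2$ or go unmatched), exactly the failure mode you acknowledge. The same problem recurs on the $h_2$ side when $\beta\le 1/2$: two contracts of wage at most $\beta B_{h_2}$ cannot exceed $(1+\alpha)B_{h_2}$, so the overload at $h_2$ needs many small contracts to be forced there at once, and in the paper each such contract is pinned down only because its owner's alternative leads to a contradiction at that owner's \emph{own} auxiliary hospital.

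Your first case (``$e$ unmatched'') also asserts without justification that $h_1$'s wage load is ``always sufficiently far below $B'_{h_1}$''; with $\alpha$ close to $\beta$ the rotation contracts could legitimately occupy nearly all of $B'_{h_1}$, and then the blocking coalition containing $(e,h_1,\beta B_{h_1})$ must drop a carefully chosen subset whose existence again depends on the unspecified wage vector. So the proposal is a plausible research plan but not a proof; to repair it you would either have to produce and verify a concrete two-hospital instance for every pair $\alpha<\beta$ (which the above suggests is unlikely to exist uniformly), or adopt the paper's replication of overflow hospitals, at which point you recover essentially the published construction.
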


We place the proof in the appendix. 

\section{New condition: Compatibility}
\label{sec:com}
This section introduces a new condition, 
which we call \textit{compatibility}, on Hatfield and Milgrom's framework 
for a situation in which budget constraints may be violated. 
Let us first consider the following straightforward choice function for each hospital $h$:
\begin{align*}
\Ch_h^*(X')=\max_{\succsim_h}\{X''\subseteq X'_h\mid X''\in\mathcal{X}_h,~w_h(X'')\le B_h\}
\end{align*}
for each $X'\subseteq X$.\footnote{%
When ties occur in the max above, 
we break ties arbitrarily, for example, 
by choosing the lexicographically smallest one with respect to a fixed order of doctors.} 
The generalized DA with the choice function $\Ch_H(X')=\bigcup_{h\in H}\Ch_h^*(X')$ does not always produce a $B_H$-stable matching because such a matching need not exist. 
Furthermore, even if there exists a $B_H$-stable matching, it may produce an unstable matching because the choice function never satisfies the SUB condition. 
In addition, evaluating $\Ch^*_h$ is computationally hard even for additive markets because the problem is equivalent to the well-known \emph{knapsack problem}, which is an NP-hard problem (see e.g., \cite{kellerer2004kp}).
Accordingly, it is neither practical nor reasonable that we require choice functions to satisfy budget constraints. 

What choice function $\Ch_h$ can we construct when we allow it to violate budget constraints? 
SUB, IRC, and LAD can be satisfied and then strategy-proofness is still characterized by them because changing the budgets of hospitals does not affect doctor preferences. 
However, SUB and IRC are not sufficient to admit a stable matching in our sense. 
Intuitively, 
the choice function should select the best set of contracts under certain budgets 
among the doctors offered to each hospital.
Even if it satisfies SUB and LAD, the best set is not always selected via the generalized DA process. 
Therefore, a hospital with non-optimal utility can form a blocking coalition to improve its utility. 
To prevent this, we introduce a novel condition: 
\begin{definition} \textsc{(Compatibility, COM)}
  Consider a hospital $h$ with a weak preference $\succsim_h$, a budget $B_h$, and contracts $X_h$. 
  For any $X''\subseteq X'\subseteq X_h$ such that $w_h(X'')\le \max\{B_h,w_h(\Ch_h(X'))\}$ and $X''\in\mathcal{X}_h$,
  it holds that
  \[
  \Ch_h(X')\succsim X''. 
  \]
\end{definition}

With this condition, the output of the choice function $\Ch_h$ is guaranteed to be
the optimal set of contracts under a certain capacity that is greater than or equal to the predefined capacity. 

We next prove that COM together with SUB and IRC characterizes stable matchings 
when budget constraints may be violated. 
\begin{theorem}\label{thm:compatible2stable}
  Suppose that for every hospital the choice function satisfies SUB, IRC, and COM. 
  The generalized DA produces a matching $X'$ that is $B'_H$-stable, 
  where $B'_H=(\max\{B_h,w_h(X')\})_{h\in H}$. 
\end{theorem}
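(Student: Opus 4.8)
The plan is to suppose, for contradiction, that the matching $X'$ returned by the generalized DA admits a blocking coalition with respect to $B'_H$, and to rule this out using COM, which says precisely that the set a hospital chooses is $f_h$-optimal among all sufficiently cheap subsets of the contracts it is offered. Two structural facts about $X'$ are needed, both coming from the run of the algorithm.

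First I would record the easy part. By the theorem of Hatfield and Milgrom~\cite{hatfield:AER:2005}, since $\Ch_H$ satisfies SUB and IRC the generalized DA terminates with an HM-stable matching $X'$; in particular $X'$ is a matching (each $Y^{(i)}$ is, since $\Ch_D$ picks at most one contract per doctor), and restricting $X'=\Ch_H(X')$ to a hospital gives $X'_h=\Ch_h(X'_h)$. Also $X'$ is trivially $B'_H$-feasible, since $w_h(X')=w_h(X'_h)\le\max\{B_h,w_h(X'_h)\}=B'_h$, and $B'_h\ge B_h$, so the inflated market is legitimate. The second, less trivial, ingredient is the standard DA lemma
\[
  \Ch_h(X'_h\cup R_h)=X'_h\qquad\text{for every }h\in H,
\]
where $R_h$ is the set of contracts at $h$ rejected at some point during the run. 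I would prove it by observing that $X'_h\cup R_h$ is exactly the set of contracts at $h$ ever proposed (a rejected contract is never re-proposed, and a proposed contract that is never rejected ends up in $X'_h$), using SUB to carry each individual rejection from the round it occurred up to this larger set so that $\Ch_h(X'_h\cup R_h)\subseteq X'_h$, and then using IRC to strip the rejected contracts away without disturbing $\Ch_h(X'_h)=X'_h$.

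Now suppose some matching $S\subseteq X_h$ is a blocking coalition for $X'$: every doctor $d$ occurring in $S\setminus X'$ strictly prefers her $S$-contract to $X'_d$, $f_h(S)>f_h(X'_h)$, and $w_h(S)\le B'_h$. I would first check $S\subseteq X'_h\cup R_h$: a contract $x=(d,h,w)\in S\setminus X'$ is acceptable to $d$ and strictly preferred to $X'_d$, but $X'_d$ is $d$'s most preferred contract among those not rejected at termination, so $x$ must itself have been rejected, i.e.\ $x\in R_h$; and $S\cap X'\subseteq X'_h$ since $S\subseteq X_h$. Then I would apply COM to $h$ with the offered set $X'_h\cup R_h\subseteq X_h$ and the candidate subset $S$: its hypothesis is met because $S\subseteq X'_h\cup R_h$ and $w_h(S)\le B'_h=\max\{B_h,w_h(X'_h)\}=\max\{B_h,w_h(\Ch_h(X'_h\cup R_h))\}$, the last step being the lemma. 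COM then gives $f_h(X'_h)=f_h(\Ch_h(X'_h\cup R_h))\ge f_h(S)$, contradicting $f_h(S)>f_h(X'_h)$; hence no blocking coalition exists and $X'$ is $B'_H$-stable.

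The step I expect to carry the real weight is the DA lemma $\Ch_h(X'_h\cup R_h)=X'_h$: one must be careful that $X'_h\cup R_h$ is genuinely the complete pool of contracts hospital $h$ is ever offered, and that SUB and IRC are used in the right order (SUB to propagate each rejection to the enlarged pool, IRC to peel the rejected part back off). The other subtle point is that COM's wage threshold is phrased in terms of $w_h(\Ch_h(\cdot))$ rather than $B_h$ alone, so the proof only closes because $w_h(\Ch_h(X'_h\cup R_h))=w_h(X'_h)$ makes that threshold equal to $B'_h$; everything else is bookkeeping already available from the Hatfield--Milgrom framework.
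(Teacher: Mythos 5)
Your proposal is correct and follows essentially the same route as the paper's proof: the central lemma $\Ch_h(X'_h\cup R_h)=X'_h$ (established via SUB to propagate rejections to the accumulated pool and IRC to strip them off) is exactly the paper's inductive claim $\Ch_H(Y^{(i)}\cup R^{(i)})=Z^{(i)}$, and the final step---showing any blocking coalition lies in $X'_h\cup R_h$ and invoking COM with threshold $\max\{B_h,w_h(\Ch_h(X'_h\cup R_h))\}=B'_h$---matches the paper's argument. You correctly flag the two points where care is needed, and your observation that the COM threshold closes only because $w_h(\Ch_h(X'_h\cup R_h))=w_h(X'_h)$ is precisely the pivot the paper relies on.
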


\begin{proof}
Let the mechanism terminate at the $l$th iteration, i.e., $X'=Y^{(l)}=Z^{(l)}$.
From its definition, it is immediately derived that 
the union of $Y^{(i)}$ and $R^{(i)}$ is nondecreasing in $i(\leq l)$; i.e., 
for any $i\in\{2,3,\dots,l\}$, 
\begin{align}\label{eq:T}
  Y^{(i)}\cup R^{(i)}\supseteq Y^{(i-1)}\cup R^{(i-1)}. 
\end{align}
%
For notational simplicity, we refer to $Y^{(i)}\cup R^{(i)}$ as $T^{(i)}$. 

Next, to obtain $\Ch_H(X'\cup R^{(l)})=X'$, 
we claim that $\Ch_H(T^{(i)})=Z^{(i)}$ for any $i\in\{1,2,\dots,l\}$ by induction.
For the base case $i=1$, we have $\Ch_H(T^{(1)})=\Ch_H(Y^{(1)})=Z^{(1)}$ 
since $R^{(1)} =
Y^{(1)}\setminus Z^{(1)} \subseteq Y^{(1)}$. 
For the general case $i>1$, we suppose $\Ch_H(T^{(i-1)})=Z^{(i-1)}$.
From~\eqref{eq:T} and the SUB condition, we have
$T^{(i-1)}\setminus\Ch_H(T^{(i-1)})\subseteq T^{(i)}\setminus\Ch_H(T^{(i)})$.
By the inductive hypothesis, we transform the left term $T^{(i-1)}\setminus\Ch_H(T^{(i-1)})$
to 
\begin{align*}
  T^{(i-1)}\setminus\Ch_H(T^{(i-1)})
  &=(Y^{(i-1)}\cup R^{(i-1)})\setminus Z^{(i-1)}
  =(R^{(i-1)}\setminus Z^{(i-1)})\cup (Y^{(i-1)}\setminus Z^{(i-1)})\\
  &= R^{(i-1)}\cup (Y^{(i-1)}\setminus Z^{(i-1)}) = R^{(i-1)}.
\end{align*}
Hence, it holds that $R^{(i-1)}\subseteq T^{(i)}\setminus\Ch_H(T^{(i)})$
and thus $\Ch_H(T^{(i)})$ includes no contract in $R^{(i-1)}$.
Together with the IRC condition, $\Ch_H(T^{(i)})$ is equal to
\begin{align*}
  \Ch_H(T^{(i)}\setminus R^{(i-1)}) 
  &= \Ch_H((Y^{(i)}\cup R^{(i)})\setminus R^{(i-1)})\\
  &= \Ch_H((Y^{(i)}\cup R^{(i-1)})\setminus R^{(i-1)}) = \Ch_H(Y^{(i)}) = Z^{(i)}. 
\end{align*}
The second equality holds because 
$Y^{(i)}\cup R^{(i)}=Y^{(i)}\cup (R^{(i-1)}\cup(Y^{(i)}\setminus Z^{(i)}))=Y^{(i)}\cup R^{(i-1)}$.
Consequently, we obtain the claim and, since $X'=Y^{(l)}=Z^{(l)}$, we get $\Ch_H(X'\cup R^{(l)})=X'$. 

Next, since $\Ch_h$ is COM, 
$\Ch_h(X_h'\cup R_h^{(l)})\succsim_h X''$ holds for any $X''\subseteq X_h'\cup R_h^{(l)}$ such that $w_h(X'')\le \max\{B_h,w_h(\Ch_h(X_h'\cup R_h^{(l)}))\}$.
Here, as $\Ch_h(X_h'\cup R_h^{(l)})=X_h'$,
$\Ch_h(X_h')\succsim_h X''$ holds for any $X''\subseteq X_h'\cup R_h^{(l)}$ such that $w_h(X'')\le B_h'$.


Suppose, contrary to our claim, that $X''\subseteq X_h$ is a blocking coalition for a hospital $h$.
By the definition of a blocking coalition, 
(i) $X''\succeq_{x_D}X'$ for all $x\in X''$, (ii) $X'' \succ_h X_h'$, and (iii) $w_h(X'')\le B_h'$.
The condition (i) implies $x\in X'\cup R^{(l)}$ for all $x\in X''$ and, hence, $X''\subseteq X_h'\cup R_h^{(l)}$ holds.
This contradicts the fact that $\Ch_H(X'\cup R^{(l)})=X'$ for any $X'$. 
Therefore, $X'$ is $B'_H$-stable where $B'_H=(\max\{B_h,w_h(X')\})_{h\in H}$. 
\end{proof}

Note that this theorem does not specify how much a hospital may exceed its budget. 
Here, one can define a choice function such that the hospital can afford to hire
all of the doctors who have accepted its contracts
far beyond the predefined budgets. 
The theorem simply ensures that if a choice function satisfies COM, in addition to SUB and IRC, 
the generalized DA admits a $B'_H$-stable matching $X'$ with $B'_h=\max\{B_h, w_h(X')\}$ for each hospital.
 
We also point out that, if each hospital $h$ knows the selectable contracts, i.e., $Y_h^{(l)}\cup R_h^{(l)}$, in advance, 
it only needs to select $\max_{\succsim_h}\{X''\subseteq X_h'\cup R_h^{(l)}\mid w_h(X'')\le B_h'\}$ 
for a certain budget $B_h'~(\ge B_h)$. 
However, the selectable contracts are difficult to predict because the resulting set depends on the choice function itself.
Designing or finding a choice function that satisfies the required conditions and
only violates budget constraints to an acceptable extent is not straightforward.
%


\section{Near-Feasible Stable Mechanisms}
\label{sec:mechs}

In matching with constraints~\cite{kamada:aer:2015,Goto:aij:2016,kurata:jair:2017}, 
designing a desirable 
mechanism essentially tailors choice functions for hospitals to satisfy necessary conditions and constraints simultaneously. Although this is still a challenging task, we manage to construct choice functions as an analog to approximation or online algorithms for knapsack problems, guided by our analysis. 

Let us start from Dantzig's greedy algorithm for \textit{fractional} knapsack problems~\cite{dantzig1957}, assuming the utilities of each hospital are additive.  
This algorithm greedily selects contracts with respect to \textit{utility per unit wage} and then outputs an optimal but fractional solution. 
We need to develop an algorithm that always provides an integral solution. 
Roughly speaking,
we must provide an algorithm (choice function) that satisfies the necessary conditions, e.g., SUB and COM, 
for any set of contracts $X'$ given at each round of the generalized DA. 
At the same time,
we need to let the algorithm determine how much budget should be exceeded beyond the predefined amount 
(how many contracts should be chosen). 
Indeed, at each round, it is difficult to predict the amount of over-budget excess 
without violating the necessary conditions. 
In what follows, we propose two choice functions that adaptively specify 
how much budgets should be spent within the generalized DA process. 



\subsection{Strategy-Proof Stable Mechanism}
This subsection proposes a strategy-proof mechanism that outputs a matching $X'$
that is $B'_H$-stable, where $B'_h$ is at most $\upperW_h\cdot \lceil B_h/\lowerW_h\rceil$ for any $h\in H$. 
Let $k_h=\lceil B_h/\lowerW_h\rceil$. 
The choice function greedily takes the top $\min\{k_h, |X'|\}$ contracts according to the utility per unit wage. 
%
Formally, it is given as Algorithm~\ref{alg:spchoice}.
\begin{algorithm}[h]
  \SetKwInOut{Input}{input}\Input{$X'\subseteq X_h$\quad\textbf{output:} $\Ch_h(X')$}
  \caption{}\label{alg:spchoice}
  Initialize $Y\ot\emptyset$\; 
  Sort $X'$ in descending order of utility per unit wage\;
  \For{$i=1,2,\dots,\min\{k_h,|X'|\}$}{
    add the $i$th contract in $X'$ to $Y$\;
  }
  \Return $Y$\;
\end{algorithm}
%

We can implement the mechanism to run in $O(|X|\log |X|)$ time by using heaps.
Let us prepare a min-heap with respect to each hospital's utility per unit wage. 
We derive the time complexity from an ``amortized'' analysis and 
can add a newly proposed contract $x\in X_h$ to the heap for $h$ in $O(\log(|X_h|))$ time. 
When a hospital $h$ rejects a contract, we can delete it in $O(\log(|X_h|))$ time. 
Hence, the total time complexity is $O(\sum_{h\in H} |X_h| \log(|X_h|)) = O(|X|\log(|X|))$.

Let us illustrate this mechanism via an example. 
\begin{example}
  \label{ex:mechanism 1}
  Consider a market with five doctors $D=\{d_1,d_2,d_3,d_4,d_5\}$ and 
  two hospitals $H=\{h_1,h_2\}$. The set of offered contracts is 
  {\small
    \begin{align*}\notag
      X=\{ & (d_1,h_1,57), (d_2,h_1,50), (d_3,h_1,42), (d_4,h_1,55), (d_5,h_1,50), \\
      &(d_1,h_2,100),(d_2,h_2,100),(d_3,h_2,100), (d_4,h_2,100), (d_5, h_2,100) \}. 
    \end{align*}}
  Here, $h_1$ offers the doctors wages from $42$ to $57$, while 
  $h_2$ offers each of them wage $100$. 
  We assume that the preferences of the doctors are 
  {\small
    \begin{align*}\notag
      & \succ_{d_1}:~ (d_1,h_1,57)  \succ_{d_1} (d_1,h_2,100),\\
      & \succ_{d_2}:~ (d_2,h_1,50)  \succ_{d_2} (d_2,h_2,100), \\\notag
      & \succ_{d_3}:~ (d_3,h_1,42)  \succ_{d_3} (d_3,h_2,100),\\
      & \succ_{d_4}:~ (d_4,h_1,55)  \succ_{d_4} (d_4,h_2,100), \\\notag
      & \succ_{d_5}:~ (d_5,h_2,100) \succ_{d_5} (d_5,h_1,50).  \notag
    \end{align*}}
  The utilities of the hospitals are given in Table~\ref{tab:ex:1}.
  {\begin{table}[tb]
      \centering
      \caption{Utilities of hospitals and utilities per unit wage.}
      \label{tab:ex:1}
        \begin{tabular}[htbp]{|c|c|c||c|c|c|}
          $x \in X_{h_1}$  & $f_{h_1}$ & $f_{h_1}/w$  & $x \in X_{h_2}$ & $f_{h_2}$ & $f_{h_2}/w$ \\\hline
          $(d_1,h_1,57)$   & $111$     & $1.95$      & $(d_1,h_2,100)$ & $50$      & $0.50$ \\
          $(d_2,h_1,50)$   & $98$      & $1.96$      & $(d_2,h_2,100)$ & $30$      & $0.30$ \\
          $(d_3,h_1,42)$   & $83$      & $1.98$      & $(d_3,h_2,100)$ & $20$      & $0.20$ \\ 
          $(d_4,h_1,55)$   & $110$     & $2.00$      & $(d_4,h_2,100)$ & $10$      & $0.10$ \\
          $(d_5,h_1,50)$   & $101$     & $2.02$      & $(d_5,h_2,100)$ & $40$      & $0.40$ 
        \end{tabular}
    \end{table}}
  Each hospital has a common fixed budget $100$ ($B_{h_1}=B_{h_2}=100$). 
  
  First, each doctor chooses her most preferred contract: 
  \begin{align*}
    X' = \{&(d_1,h_1,57), (d_2,h_1,50), (d_3,h_1,42), (d_4,h_1,55), (d_5,h_2,100)\}.
  \end{align*}
  Since $\lceil B_{h_1}/\lowerW_{h_1} \rceil = 3$, 
  $\Ch_{h_1}(X')$ chooses the top three contracts according to the ranking of utilities per unit wage shown in Table~\ref{tab:ex:1},
  which is, $\{(d_4,h_1,55), (d_3,h_1,42), (d_2,h_1,50) \}$. 
  In addition, 
  $\Ch_{h_2}(X')$ chooses the top $\lceil B_{h_2}/\lowerW_{h_2} \rceil = 1$ contract, that is, $\{(d_5,h_2,100)\}$.
  
  Then, $d_1$ chooses her second preferred contract, 
  \begin{align*}
      X' = \{& (d_1,h_2,100), (d_2,h_1,50), (d_3,h_1,42), (d_4,h_1,55), (d_5,h_2,100)\}. 
  \end{align*}
  $\Ch_{h_2}(X')$ is $\{(d_1,h_2,100)\}$, whose utility per unit wage is larger than $(d_5,h_2,100)$. 
  
  Next, $d_5$ chooses her second preferred contract, i.e., $(d_5,h_1,50)$, whose utility per unit wage is $2.02$. 
  Since this is higher than the other contract in $X'_{h_1}$, $(d_2,h_1,50)$ is rejected. 
  Thus, $d_2$ chooses her second preferred contract, $(d_2,h_2,100)$: 
  \begin{align*}
           X'= \{&(d_1,h_2,100), (d_2,h_2,100), (d_3,h_1,42), (d_4,h_1,55), (d_5,h_1,50)\}. 
  \end{align*}
  $\Ch_{h_2}(X')=\{(d_1,h_2,100)\}$, since it 
  has a higher utility per unit wage than $(d_2,h_2,100)$. 
  Finally, since $d_2$ no longer has a preferred contract, 
  {\small\begin{align*}
      X'= \{(d_1,h_2,100), (d_3,h_1,42), (d_4,h_1,55), (d_5,h_1,50)\}.   
    \end{align*}}
  No contract is rejected and the mechanism terminates. 
\end{example}

The choice function satisfies the following conditions: 
\begin{lemma}\label{lem:spchprop} 
  For each hospital $h$, 
  the choice function defined in Algorithm~\ref{alg:spchoice} 
  is SUB, IRC, LAD, and COM. 
\end{lemma}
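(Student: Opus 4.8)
The proof proceeds by verifying each of the four properties in turn for the choice function $\Ch_h$ of Algorithm~\ref{alg:spchoice}, which sorts $X'$ by utility per unit wage (break ties by a fixed doctor order, as in the footnote) and returns the top $\min\{k_h,|X'|\}$ contracts, where $k_h=\lceil B_h/\lowerW_h\rceil$. The key observation to isolate up front is that $\Ch_h(X')$ depends only on the relative ranking of contracts by the ratio $f_h(x)/x_W$ (with the fixed tie-break), and that this ranking is a fixed total order on $X_h$ independent of which subset $X'$ is presented. I will write $x \triangleright y$ for ``$x$ precedes $y$ in this total order.'' Then $\Ch_h(X')$ is simply the $\triangleright$-smallest $\min\{k_h,|X'|\}$ elements of $X'$; call this the ``prefix'' characterization.

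\textbf{SUB and IRC.} For SUB, take $X''\subseteq X'\subseteq X_h$ and a contract $x\in X''\setminus\Ch_h(X'')$. Being rejected from $X''$ means $x$ is not among the $\triangleright$-smallest $k_h$ elements of $X''$; since $X''\subseteq X'$, there are at least as many elements $\triangleright$-below $x$ in $X'$ as in $X''$, so $x$ is not among the $\triangleright$-smallest $k_h$ elements of $X'$ either, i.e.\ $x\notin\Ch_h(X')$, hence $x\in X'\setminus\Ch_h(X')$. For IRC, suppose $\Ch_h(X'\cup X'')\subseteq X'$ with $X''\cap X'=\emptyset$; the accepted set is the $\triangleright$-prefix of $X'\cup X''$, and the hypothesis says none of the added contracts $X''$ lie in this prefix, so every element of $X''$ is $\triangleright$-above all $k_h$ accepted elements, meaning the $\triangleright$-prefix of $X'\cup X''$ coincides with the $\triangleright$-prefix of $X'$; thus $\Ch_h(X')=\Ch_h(X'\cup X'')$.

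\textbf{LAD.} For $X''\subseteq X'\subseteq X_h$, $|\Ch_h(X'')|=\min\{k_h,|X''|\}\le\min\{k_h,|X'|\}=|\Ch_h(X')|$, which is immediate.

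\textbf{COM.} This is the main obstacle and the heart of the lemma. Fix $X''\subseteq X'\subseteq X_h$ with $w_h(X'')\le\max\{B_h,\,w_h(\Ch_h(X'))\}$; I must show $f_h(\Ch_h(X'))\ge f_h(X'')$. Let $S=\Ch_h(X')$, the $\triangleright$-prefix of $X'$ of size $s=\min\{k_h,|X'|\}$. The plan is a greedy exchange argument: I claim $|X''|\le s$. Indeed, if $|X''|>s$ then $s<k_h$, forcing $s=|X'|$ and hence $X''\subseteq X'=S$, contradicting $|X''|>s=|X'|$; so $|X''|\le s$. Wait—I need the bound using the wage hypothesis too; more carefully, the case $s=k_h$ is the substantive one: here every element of $X''$ has wage $\ge\lowerW_h$, so $|X''|\le w_h(X'')/\lowerW_h\le\max\{B_h,w_h(S)\}/\lowerW_h$. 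If this bound were always $\le k_h$ we would be done counting-wise, but $w_h(S)$ can exceed $B_h$; so the cleaner route is: show that for \emph{each} element $y\in X''\setminus S$ there is a distinct element $x\in S\setminus X''$ with $f_h(x)\ge f_h(y)$, then $f_h(\Ch_h(X'))=f_h(S)\ge f_h(X'')$ follows by summing over the matched pairs plus the common part $S\cap X''$. To build this injection I will use that $S$ is a $\triangleright$-prefix: any $y\in X''\setminus S$ satisfies $y\notin S$, so there are at least $s$ elements of $X'$ that are $\triangleright$-below $y$, all lying in $S$; I must show enough of them escape $X''$. The counting input is precisely that $X''$ cannot be ``too large at the bottom'': since the elements of $X''$ below any threshold have wage at least $\lowerW_h$ each, and $w_h(X'')\le\max\{B_h,w_h(S)\}$ with $s=k_h=\lceil B_h/\lowerW_h\rceil$, one gets $|X''\cap(\text{the }\triangleright\text{-prefix of size }s)|$ is large enough to force, for each $y\in X''\setminus S$, an unused partner in $S$. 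I would formalize this by a Hall-type / prefix-counting argument: order $X'$ as $x_1\triangleright x_2\triangleright\cdots$; then $S=\{x_1,\dots,x_s\}$, and for any $j$, $|\{x_1,\dots,x_j\}\cap X''|\le j$ trivially, and I need the sharper statement that $|\{x_1,\dots,x_s\}\setminus X''|\ge |X''\setminus S|$, equivalently $|X''|\le s+|S\cap X''|-|S\setminus X''|+\dots$—cleaner: $|X''\setminus S|=|X''|-|S\cap X''|$ and $|S\setminus X''|=s-|S\cap X''|$, so the needed inequality $|S\setminus X''|\ge|X''\setminus S|$ is exactly $|X''|\le s$. So everything reduces to the clean claim $|X''|\le s$ under the wage hypothesis, which holds because: if $s=|X'|$ then $X''\subseteq X'$ gives $|X''|\le|X'|=s$; if $s=k_h<|X'|$, then each contract has wage $\ge\lowerW_h$ so $|X''|\le w_h(X'')/\lowerW_h$, and I will show this is $\le k_h$ by noting that in this iteration-context $w_h(S)\le k_h\cdot\upperW_h$ need not be small, so instead I use that $w_h(X'')\le\max\{B_h,w_h(S)\}$ and must rule out $w_h(S)$ being so large that $|X''|$ exceeds $k_h$; here the ratio ordering helps: the top $k_h$ contracts by ratio, if their total wage is large, still cannot be ``out-valued'' by any set of more than $k_h$ contracts because each would bring strictly less utility-per-wage—this is the standard greedy-optimality fact for the integral knapsack relaxation, and I will invoke it as the exchange step rather than a pure cardinality count. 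The honest statement of the main obstacle: reconciling the two regimes in $\max\{B_h,w_h(\Ch_h(X'))\}$ and making the greedy exchange argument go through uniformly; I expect to handle it by the pairing construction above, with the ratio order providing $f_h(x)/x_W\ge f_h(y)/y_W$ for each matched pair and then using $x_W\ge\lowerW_h$ together with the wage budget to control cardinalities.
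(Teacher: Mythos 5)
Your treatment of SUB, IRC, and LAD via the ``fixed total order / prefix'' characterization is correct and matches the spirit of the paper, which dismisses these three as immediate consequences of the choice function picking the top $\min\{k_h,|X'|\}$ contracts.

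The COM argument, however, has a genuine gap. Your plan reduces everything to the claim $|X''|\le s$ (so that an injection from $X''\setminus S$ into $S\setminus X''$ exists), but that claim is false under the stated hypothesis. Take $B_h=10$, $\lowerW_h=1$, $\upperW_h=10$, so $k_h=10$; let $X'$ consist of ten wage-$10$ contracts of ratio $2$ and one hundred wage-$1$ contracts of ratio $1$. Then $S=\Ch_h(X')$ is the ten wage-$10$ contracts with $w_h(S)=100$, and $X''$ may be all one hundred wage-$1$ contracts, since $w_h(X'')=100\le\max\{B_h,w_h(S)\}$; here $|X''|=100>10=s$, yet COM still holds. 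So the Hall-type pairing cannot be the mechanism. A second, independent problem is that even where a pairing exists, the ratio order only gives $f_h(x)/x_W\ge f_h(y)/y_W$ for a matched pair, not $f_h(x)\ge f_h(y)$ (a high-ratio, low-wage contract can have smaller utility than a low-ratio, high-wage one), so summing over pairs does not yield $f_h(S)\ge f_h(X'')$.

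The correct route, which you gesture at in your last sentence but do not carry out, is the one the paper uses: when $|X'|>k_h$ the greedy selection satisfies $w_h(\Ch_h(X'))\ge \lowerW_h\cdot k_h\ge B_h$, so the hypothesis collapses to $w_h(X'')\le w_h(\Ch_h(X'))$; and since the top-$k_h$-by-ratio set exactly fills the capacity $C=w_h(\Ch_h(X'))$, it attains the \emph{fractional} knapsack optimum at capacity $C$ (Dantzig's greedy), which upper-bounds the integral optimum at capacity $C$, which in turn upper-bounds $f_h(X'')$. Equivalently, with $r$ the smallest ratio in $S$: $f_h(X''\setminus S)\le r\cdot w_h(X''\setminus S)\le r\cdot w_h(S\setminus X'')\le f_h(S\setminus X'')$, using $w_h(X'')\le w_h(S)$; adding $f_h(S\cap X'')$ gives the result. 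This works with total wages rather than cardinalities and avoids both problems above. As written, your proposal would not go through.
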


\begin{proof}
It is straightforward that the choice function is SUB, IRC, and LAD
because it simply picks at most the top $\min\{k_h,|X'|\}$ contracts. 
Next, let us turn to COM. Let $X''\subseteq X'\subseteq X_h$. 
If $|X'|\le k_h$, since the choice function picks all contracts in $X'$, 
$f_h(\Ch_h(X'))=f_h(X')\ge f_h(X'')$ clearly holds. 
On the other hand, if $|X'|>k_h$, since it picks $k_h$ contracts, 
we have $w_h( \Ch_h(X'))\geq \lowerW_h \cdot k_h \geq B_h$. 
Thus, it is sufficient to claim that 
\[
f_h(\Ch_h(X'))\geq f_h(X'')\ \text{if}\ w_h(X'')\leq w_h(\Ch_h(X'))
\]
for any $X''\subseteq X' \subseteq X$. 

Since the choice function greedily pick $k_h$ contracts with respect to the utility per unit wage, 
the chosen contracts yield the optimal utility of a fractional knapsack problem~\cite{dantzig1957}. 
Also, to maximize the utility of hospital $h$ with $X''$, we need to solve an integral knapsack problem. 
Therefore, $f_h(\Ch_h(X'))$ is at least

\begin{align*}
&\max_{z\in [0,1]^{X'} } \left\{ \sum_{x\in X'} f_h(x)\cdot z_x \mid \sum_{x\in X'}x_W\cdot z_x \leq w_h(\Ch_h(X'))\right\} \\
\geq &\displaystyle\max_{z\in \{0,1\}^{X'}} \left\{ \sum_{x\in X'} f_h(x)\cdot z_x \mid \sum_{x\in X'}x_W\cdot z_x \leq w_h(\Ch_h(X'))\right\} \\
= &\displaystyle\max_{Y\subseteq X'} \left\{ \sum_{x\in Y} f_h(x) \mid \sum_{x\in Y} x_W \leq w_h(\Ch_h(X')) \right\} \geq f_h(X'').
\end{align*}
Note that the first inequality is derived from the fact that the optimal value of the fractional knapsack problem 
is never worse than that of the integral one. 
Thus, the choice function $\Ch_h$ satisfies COM and the proof is complete. 
%
\end{proof}

Next, we show the upper bound of the increment of the budgets.
The following lemma clearly holds from 
$|\Ch_h(X')|\le k_h$ and $x_W\le \upperW_h$ for all $x\in X_h$.
\begin{lemma}\label{lem:spchapprox}
  For each choice function defined in Algorithm~\ref{alg:spchoice} and
  a set of contracts $X'\subseteq X_h$, it holds that 
  \begin{align*}
    w_h(\Ch_h(X'))\le \upperW_h\cdot k_h~(=\upperW_h\cdot\lceil B_h/\lowerW_h\rceil).
  \end{align*}
\end{lemma}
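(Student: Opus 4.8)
The plan is to read the bound directly off the structure of Algorithm~\ref{alg:spchoice}, so the argument is essentially a two-line unwinding of definitions. First I would bound the \emph{cardinality} of the output: the \textbf{for} loop runs exactly $\min\{k_h,|X'|\}$ times, and each iteration adds a single (new) contract of $X'$ to $Y$, so $|\Ch_h(X')| = \min\{k_h,|X'|\} \le k_h$. This handles both regimes at once: when $|X'| \ge k_h$ the loop selects exactly $k_h$ contracts, and when $|X'| < k_h$ it selects all $|X'| < k_h$ of them, so the bound $|\Ch_h(X')|\le k_h$ holds with slack in the latter case.

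Next I would bound the \emph{wage} of each selected contract. Since $\Ch_h(X') \subseteq X' \subseteq X_h$, every contract $x$ that the algorithm places in $Y$ satisfies $x_W \le \max_{x'\in X_h} x'_W = \upperW_h$ by the very definition of $\upperW_h$. Combining the two observations, using additivity of the total-wage functional $w_h$, gives
\begin{align*}
  w_h(\Ch_h(X')) = \sum_{x \in \Ch_h(X')} x_W \le |\Ch_h(X')|\cdot \upperW_h \le k_h\cdot \upperW_h = \upperW_h\cdot \lceil B_h/\lowerW_h\rceil,
\end{align*}
which is exactly the claimed inequality, recalling $k_h = \lceil B_h/\lowerW_h\rceil$.

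There is no real obstacle here: the statement is immediate once one observes that the loop length is capped by $k_h$ and that each wage is capped by $\upperW_h$. The only point requiring a sentence of care is the early-termination case $|X'| < k_h$, where the loop stops before reaching $k_h$ iterations; I would note that the cardinality bound $|\Ch_h(X')| \le k_h$ still holds (indeed strictly), so the final estimate is unaffected. This is why the paper can legitimately assert the lemma "clearly holds."
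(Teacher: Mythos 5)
Your proof is correct and matches the paper's own (one-line) justification exactly: the paper derives the lemma from $|\Ch_h(X')|\le k_h$ and $x_W\le \upperW_h$, which are precisely the two observations you combine. No issues.
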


We summarize the above arguments in the following theorem: 
\begin{theorem}
  The generalized DA mechanism with the choice functions defined in Algorithm~\ref{alg:spchoice} is
  strategy-proof for doctors and produces a $B_H'$-stable matching such that 
  $B_h\le B_h'\le \upperW_h\cdot k_h$ for any $h\in H$. 
  In addition, the mechanism can be implemented to run in $O(|X|\log|X|)$ time. 
\end{theorem}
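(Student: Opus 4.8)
The plan is to obtain the theorem by assembling the lemmas already proved in this subsection with the two general results quoted earlier. First I would invoke Lemma~\ref{lem:spchprop}, which states that for every hospital $h$ the choice function of Algorithm~\ref{alg:spchoice} satisfies SUB, IRC, LAD, and COM. Feeding SUB and IRC into the Hatfield--Milgrom result recalled just after the definition of LAD shows that the generalized DA terminates and returns an HM-stable matching $X'$, and the additional LAD property upgrades this to strategy-proofness for doctors; this settles the first assertion. For the stability-with-budget-excess assertion I would instead feed SUB, IRC, and COM into Theorem~\ref{thm:compatible2stable}, which yields directly that the returned matching $X'$ is $B'_H$-stable for $B'_H=(\max\{B_h,w_h(X')\})_{h\in H}$.

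Next I would pin down $B'_h$. Since $B'_h=\max\{B_h,w_h(X'_h)\}$, the lower bound $B_h\le B'_h$ is immediate. For the upper bound, recall that at termination (say at iteration $l$) the proof of Theorem~\ref{thm:compatible2stable} establishes $\Ch_h(X'_h\cup R_h^{(l)})=X'_h$ for every $h$; hence Lemma~\ref{lem:spchapprox}, applied to the selectable set $X'_h\cup R_h^{(l)}$, gives $w_h(X'_h)=w_h(\Ch_h(X'_h\cup R_h^{(l)}))\le \upperW_h\cdot k_h$. Because $k_h=\lceil B_h/\lowerW_h\rceil\ge B_h/\lowerW_h$ and $\upperW_h\ge\lowerW_h$, we also have $\upperW_h\cdot k_h\ge B_h$, so the maximum collapses: $B'_h=\max\{B_h,w_h(X'_h)\}\le \upperW_h\cdot k_h$. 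This gives $B_h\le B'_h\le \upperW_h\cdot k_h$ for every $h\in H$.

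For the running time I would give the amortized heap argument already sketched in the \texttt{fullpaper} paragraph preceding Example~\ref{ex:mechanism 1}: maintain, for each hospital $h$, a heap keyed by utility per unit wage; when a doctor proposes a new contract of $X_h$ it is inserted in $O(\log|X_h|)$ time, and when a contract is rejected it is deleted in $O(\log|X_h|)$ time, while $\Ch_h(X')$ is read off as the current top $\min\{k_h,|X'|\}$ elements. Over the whole run of the generalized DA each contract is inserted at most once and deleted at most once, so the total work is $O\!\left(\sum_{h\in H}|X_h|\log|X_h|\right)=O(|X|\log|X|)$.

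I do not anticipate a genuine obstacle: the theorem is essentially a repackaging of Lemmas~\ref{lem:spchprop} and~\ref{lem:spchapprox}, Theorem~\ref{thm:compatible2stable}, and the Hatfield--Milgrom theorem. The only points that require a line of care are (i) applying Lemma~\ref{lem:spchapprox} to the selectable set $X'_h\cup R_h^{(l)}$ rather than to $X'_h$ itself, which is exactly the identity furnished by the proof of Theorem~\ref{thm:compatible2stable}, and (ii) checking $\upperW_h\cdot k_h\ge B_h$ so that $\max\{B_h,w_h(X'_h)\}$ simplifies to the claimed bound.
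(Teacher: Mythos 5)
Your proposal is correct and follows exactly the route the paper intends: the theorem is stated as a summary of Lemma~\ref{lem:spchprop} (SUB, IRC, LAD, COM), the Hatfield--Milgrom strategy-proofness result, Theorem~\ref{thm:compatible2stable}, Lemma~\ref{lem:spchapprox}, and the heap-based running-time argument. Your two "points of care" (applying the wage bound to the selectable set at termination, and checking $\upperW_h\cdot k_h\ge B_h$ so the maximum collapses) are exactly the details the paper leaves implicit, and both are handled correctly.
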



\begin{modified}
Finally, we note that 
this mechanism is \textit{almost tight} 
as long as we use the choice functions that satisfy LAD and COM. 


\begin{theorem}
  For any $\upperW_h$, $\lowerW_h$, and $B_h$ ($0<\lowerW_h\le\upperW_h\le B_h$),
  there exists a set of contracts $X_h$ and an additive utility function $f_h: X_h\to\mathbb{R}_{+}$
  such that any choice function $\Ch_h: 2^{X_h}\to 2^{X_h}$ satisfies 
  $\Ch_h(X')>\upperW_h\cdot (B_h-\upperW_h)/\lowerW_h$ for some $X'\subseteq X_h$
  if $\Ch_h$ is LAD and COM.
\end{theorem}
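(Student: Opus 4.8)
The plan is to work with a single hospital $h$ and to construct a two-type menu on which the combination of LAD and COM forces the chosen set to have large total wage. Put $q=\lfloor B_h/\lowerW_h\rfloor$, fix any integer $n\ge\lceil B_h/\lowerW_h\rceil$, and fix $\epsilon\in(0,1/n)$. The menu $X_h$ will have $n$ ``cheap'' contracts, each of wage $\lowerW_h$ and utility $\epsilon$, together with $n$ ``expensive'' contracts, each of wage $\upperW_h$ and utility $1$; all wages lie in $[\lowerW_h,\upperW_h]\subseteq(0,B_h]$, so the instance is legal, and I would take $X'=X_h$ itself as the witnessing set.

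First I would bound $|\Ch_h(X_h)|$ from below. Apply COM to the subset $A\subseteq X_h$ of all cheap contracts: any $q$ of them form an admissible competitor, since their wage $q\,\lowerW_h\le B_h\le\max\{B_h,w_h(\Ch_h(A))\}$; hence $f_h(\Ch_h(A))\ge q\epsilon$, so $|\Ch_h(A)|\ge q$, and since $A\subseteq X_h$, LAD gives $|\Ch_h(X_h)|\ge q$. Write $S=\Ch_h(X_h)$, let $p$ and $s$ be the numbers of expensive and of cheap contracts in $S$ (so $p+s\ge q$), and set $W^\ast=w_h(S)=p\,\upperW_h+s\,\lowerW_h$; the goal is $W^\ast>\upperW_h\cdot(B_h-\upperW_h)/\lowerW_h$.

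Then I would run a swapping argument. If $p=n$, then $W^\ast\ge q\,\upperW_h$, and $q\,\upperW_h>\upperW_h\cdot(B_h-\upperW_h)/\lowerW_h$ because $q\,\lowerW_h=B_h-(B_h\bmod\lowerW_h)>B_h-\upperW_h$. Otherwise there is an expensive contract $e\notin S$, and then $s\,\lowerW_h<\upperW_h$: if instead $s\,\lowerW_h\ge\upperW_h$, put $t=\lceil\upperW_h/\lowerW_h\rceil$ (so $t\le n$ and $s\ge t$); replacing $t$ of the cheap contracts of $S$ by $e$ produces a subset of $X_h$ whose wage stays $\le W^\ast\le\max\{B_h,W^\ast\}$ but whose utility increases by $1-t\epsilon>0$, contradicting COM. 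From $s\,\lowerW_h<\upperW_h$ we get $s<\upperW_h/\lowerW_h$, and using $p\ge q-s$ together with $\upperW_h\ge\lowerW_h$,
\[
 W^\ast=p\,\upperW_h+s\,\lowerW_h\ \ge\ (q-s)\,\upperW_h+s\,\lowerW_h\ =\ q\,\upperW_h-s(\upperW_h-\lowerW_h)\ >\ (q+1)\,\upperW_h-\tfrac{\upperW_h^{\,2}}{\lowerW_h},
\]
the last inequality being strict when $\upperW_h>\lowerW_h$ (and when $\upperW_h=\lowerW_h$ the middle expression is already $q\,\upperW_h>B_h-\lowerW_h=\upperW_h\cdot(B_h-\upperW_h)/\lowerW_h$). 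Finally $(q+1)\,\upperW_h-\upperW_h^{\,2}/\lowerW_h\ge\upperW_h\cdot(B_h-\upperW_h)/\lowerW_h$ because $(q+1)\lowerW_h\ge B_h$, which holds since $q=\lfloor B_h/\lowerW_h\rfloor$. In every case $w_h(\Ch_h(X_h))>\upperW_h\cdot(B_h-\upperW_h)/\lowerW_h$, which is the claim.

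The bookkeeping with floors and remainders is routine. The part I expect to be the crux is the swap and its consequence $s\,\lowerW_h<\upperW_h$: one has to see that a choice function which LAD forces to output at least $q$ contracts cannot meet that constraint by hoarding cheap contracts, so all but strictly less than one expensive contract's worth of the surplus wage must be carried by expensive contracts. Tuning the menu so that this is robust — $\epsilon$ small enough that a single expensive contract outvalues every admissible block of cheap ones, and $q=\lfloor B_h/\lowerW_h\rfloor$ so that $(q+1)\lowerW_h\ge B_h$ — is precisely what makes the resulting wage match the bound $\upperW_h\cdot(B_h-\upperW_h)/\lowerW_h$ exactly.
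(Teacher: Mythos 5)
Your proposal is correct and follows essentially the same route as the paper's proof: a two-type menu of cheap (wage $\lowerW_h$) and expensive (wage $\upperW_h$) contracts, COM applied to the all-cheap subset plus LAD to force $|\Ch_h(X_h)|\ge\lfloor B_h/\lowerW_h\rfloor$, a COM-based swap showing the number of chosen cheap contracts is below $\upperW_h/\lowerW_h$ (or else all expensive ones are taken), and then the same floor arithmetic. The only differences are cosmetic (utilities $\epsilon$ and $1$ versus the paper's $\lowerW_h$ and $2\upperW_h$, and an explicit treatment of the $\upperW_h=\lowerW_h$ degenerate case, which you handle correctly).
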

\begin{proof}
  Let $k=\lfloor B_h/\lowerW_h\rfloor$, $X_h=\{x_1,\dots,x_{2k}\}$, and
  \begin{align*}
    x_i=\begin{cases}
      (d_i,h,\lowerW_h)&(i=1,\dots,k),\\
      (d_i,h,\upperW_h)&(i=k+1,\dots,2k),
    \end{cases}
    \quad
    f_h(x_i)=\begin{cases}
      \lowerW_h&(i=1,\dots,k),\\
      2\cdot\upperW_h&(i=k+1,\dots,2k).
    \end{cases}
  \end{align*}
  Since $\Ch_h$ is COM, 
  $\Ch_h(\{x_{1},\dots,x_{k}\})=\{x_{1},\dots,x_{k}\}$ holds.
  Thus, we have $|\Ch_h(X_h')|\ge k$ if $\{x_{1},\dots,x_{k}\}\subseteq X_h'\subseteq X_h$
  because $\Ch_h$ is LAD.
  Let $|\Ch(X_h)\cap\{x_1,\dots,x_k\}|=s$ and $|\Ch(X_h)\cap\{x_{k+1},\dots,x_{2k}\}|=t$.
  Here, $s+t\ge k$ holds.
  Without loss of generality, we may assume that 
  $\Ch(X_h)=\{x_1,\dots,x_s,x_{k+1},\dots,x_{k+t}\}$.
  
  If $s\ge\upperW_h/\lowerW_h$ and $t<k$,
  $\Ch_h$ is not COM 
  as 
  $f_h(\Ch(X_h))<f_h(X')$ and $w_h(\Ch_h(X_h))\ge w_h(X')$
  for $X'=\{x_{\lceil \upperW_h/\lowerW_h\rceil+1},\dots,x_{s},x_{k+1},\dots,x_{k+t}\}$.
  Thus, we have $s<\upperW_h/\lowerW_h$ or $t=k$.
  If $t=k$, we obtain that
  \begin{align*}
    w_h(\Ch_h(X_h))=\lowerW_h\cdot s+\upperW_h\cdot t\ge \upperW_h\cdot k=\upperW_h\cdot\left\lfloor\frac{B_h}{\lowerW_h}\right\rfloor>
    \upperW_h\cdot \left(\frac{B_h}{\lowerW_h}-1\right)
    \ge\upperW_h\cdot \frac{B_h-\upperW_h}{\lowerW_h}.
  \end{align*}
  On the other hand, if $s<\upperW_h/\lowerW_h$, we get 
  \begin{align*}
    w_h(\Ch_h(X_h))
    &=\lowerW_h\cdot s+\upperW_h\cdot t\\
    &\ge \upperW_h\cdot k-(\upperW_h-\lowerW_h)\cdot s\\
    &> \upperW_h\cdot \left\lfloor\frac{B_h}{\lowerW_h}\right\rfloor-(\upperW_h-\lowerW_h)\cdot \frac{\upperW_h}{\lowerW_h}\\
    &\ge \upperW_h\cdot \frac{B_h-\upperW_h}{\lowerW_h}. \qedhere
  \end{align*}
\end{proof}
\end{modified}

\subsection{Non-Strategy-Proof Stable Mechanism}
This subsection proposes a stable mechanism that is not strategy-proof, but 
improves the budget bound, that is, this mechanism outputs 
a matching $X'$ that is 
$B'_H$-stable where $B'_h$ is at most $B_h+\upperW_h$ for any $h\in H$. 
This is the best possible bound from Theorem~\ref{thm:lowerapprox}.

As with the first, the second choice function greedily picks the top 
$\min \{k_h, |X'|\}$ contracts. However,
$k_h$ is defined as $\min\bigl\{ k \mid \sum_{i=1}^{k}x^{(i)} \geq B_h \bigr\}$,
where $x^{(i)}$ denotes the $i$th highest contract with respect to the utility per unit wage. 
%
Formally, it is given as Algorithm~\ref{alg:nspchoice}.

\begin{algorithm}
\SetKwInOut{Input}{input}\Input{$X'\subseteq X_h$\quad\textbf{output:} $\Ch_h(X')$}
\caption{}\label{alg:nspchoice}
Initialize $Y\ot \emptyset$\;
Sort $X'$ in descending order of utility per unit wage\;
\For{$i=1,2,\dots,|X'|$}{
  let $x$ be the $i$th contract in $X'$\;
  \lIf{$w_h(Y) < B_h$}{$Y\ot Y\cup\{x\}$}
}
\Return $Y$\;
\end{algorithm}
Note that the running time is $O(|X|\log|X|)$, as with the first mechanism. 
%
\begin{modified}
Let us illustrate this mechanism via an example. 
\begin{example}
  \label{ex:mechanism 2}
  We consider a situation that is identical to Example~\ref{ex:mechanism 1} with the first two rounds the same. 
  
  
  At the third round, when $d_5$ chooses $(d_5,h_1,50)$, 
  \begin{align*}
      X' = \{&(d_1,h_2,100), (d_2,h_1,50), (d_3,h_1,42), (d_4,h_1,55), (d_5,h_1,50)\}. 
  \end{align*}
  The number of contracts $\Ch_{h_1}(X')$ chooses changes from three to two; 
  the total wage of the first two contracts for $h_1$ ($105$) exceeds the budget limit of $100$. 
  Thus, $(d_2,h_1,50)$ and $(d_3,h_1,42)$ are rejected. 

  Next, $d_2$ and $d_3$ choose their second preferred contracts, i.e., $(d_2,h_2,100)$ and $(d_3,h_2,100)$, 
  but these contracts are also rejected. Finally, since they no longer have preferred contracts, 
  \begin{align*}
    X'=\{(d_1,h_2,100), (d_4,h_1,55), (d_5,h_1,50)\}.     
  \end{align*}
  No contract is rejected and the mechanism terminates. 
\end{example}
\end{modified}
We show the conditions that this mechanism satisfies below. 
\begin{lemma}\label{lem:nspchprop}
  For each hospital, the choice function defined in Algorithm~\ref{alg:nspchoice} is
  SUB, IRC, and COM. 
\end{lemma}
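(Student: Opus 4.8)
The plan is to verify each of the three properties—SUB, IRC, and COM—for the greedy choice function of Algorithm~\ref{alg:nspchoice}, exploiting the fact that the selection depends only on the fixed descending order of utility per unit wage. First I would fix a hospital $h$ and recall the key structural observation: given $X'\subseteq X_h$, the algorithm scans contracts in the order $x^{(1)},x^{(2)},\dots$ and greedily adds each one as long as the running total wage has not yet reached $B_h$; thus it selects a prefix-like set of the sorted list, stopping at the first index $k_h(X')$ where the accumulated wage first meets or exceeds $B_h$ (or at $|X'|$ if it never does). Crucially, because ties are broken by a fixed global order on contracts, the relative order of any two contracts is the same in every subset, so $\Ch_h$ restricted to any $X'$ just runs this greedy on the induced order.

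For SUB, take $X''\subseteq X'\subseteq X_h$ and a contract $x\in X''\setminus\Ch_h(X'')$. I need $x\notin\Ch_h(X')$. The point is that $x$ being rejected from $X''$ means that, in the sorted order of $X''$, the contracts strictly before $x$ already accumulate wage $\ge B_h$. Since those contracts also appear before $x$ in the sorted order of the superset $X'$ (order is preserved), by the time the greedy on $X'$ reaches $x$ it has already seen all of them—plus possibly more—so its running total is also $\ge B_h$, hence $x$ is rejected from $X'$ as well. (One must be slightly careful: the greedy on $X'$ adds a contract only if the total is \emph{strictly} less than $B_h$, so I want to argue the total just before reaching $x$ in $X'$ is at least as large as the total just before reaching $x$ in $X''$; this holds because the set of contracts preceding $x$ in $X''$ is a subset of those preceding $x$ in $X'$, and wages are positive.) For IRC, suppose $\Ch_h(X'\cup X'')\subseteq X'$ for $X''\subseteq X_h\setminus X'$; I want $\Ch_h(X')=\Ch_h(X'\cup X'')$. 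Since no contract of $X''$ is selected, every $x\in X''$ is rejected in the run on $X'\cup X''$, which by the same running-total argument means the contracts of $(X'\cup X'')$ preceding $x$ already reach $B_h$; deleting the rejected $X''$-contracts does not change which $X'$-contracts precede a given $X'$-contract nor the running totals at those points, so the greedy makes identical decisions on $X'$ alone. I would phrase this via: for each $x\in X'$, the accumulated wage of selected contracts just before examining $x$ is the same whether or not the (ultimately rejected) elements of $X''$ are present.

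For COM, fix $X''\subseteq X'\subseteq X_h$ with $w_h(X'')\le\max\{B_h,w_h(\Ch_h(X'))\}$; I must show $f_h(\Ch_h(X'))\ge f_h(X'')$. Split into cases exactly as in the proof of Lemma~\ref{lem:spchprop}. If the greedy never reached $B_h$, i.e. $w_h(\Ch_h(X'))<B_h$, then it must have exhausted $X'$, so $\Ch_h(X')=X'\supseteq X''$ and additivity gives $f_h(\Ch_h(X'))\ge f_h(X'')$ (here $w_h(X'')\le w_h(X')=w_h(\Ch_h(X'))<B_h$, so the constraint is automatically consistent). Otherwise $w_h(\Ch_h(X'))\ge B_h$, so the hypothesis reads $w_h(X'')\le w_h(\Ch_h(X'))$. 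Then I invoke the fractional-knapsack optimality of Dantzig's greedy: the set $\Ch_h(X')$ is exactly the greedy-by-density prefix with capacity $w_h(\Ch_h(X'))$ (it takes the densest contracts until the budget $B_h$ is met, and since each contract has wage $\le\upperW_h\le B_h$ the overshoot is at most one contract, which is precisely how the prefix of total capacity $w_h(\Ch_h(X'))$ looks), hence $f_h(\Ch_h(X'))$ equals the optimum of the fractional knapsack on $X'$ with capacity $w_h(\Ch_h(X'))$, which dominates the integral optimum over subsets $Y\subseteq X'$ with $w_h(Y)\le w_h(\Ch_h(X'))$, and in particular dominates $f_h(X'')$. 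The main obstacle is the COM argument: unlike Algorithm~\ref{alg:spchoice}, here $\Ch_h(X')$ is not simply ``the top $k_h$ contracts'' with $k_h$ fixed in advance—$k_h$ depends on $X'$—so I need to argue carefully that the selected set still coincides with the density-greedy prefix for capacity exactly $w_h(\Ch_h(X'))$, i.e. that the greedy does not ``skip'' a cheap high-density contract in favor of a later one; this follows because the stopping rule only ever stops \emph{adding}, it never skips, so $\Ch_h(X')$ is a genuine prefix of the density order, and Dantzig's bound applies verbatim.
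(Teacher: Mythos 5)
Your proof is correct and follows essentially the same route as the paper's: SUB and IRC from the fact that the greedy selects a prefix of the fixed density order and stops only once the running wage reaches $B_h$ (the paper phrases SUB via the partition $\Ch_h(X')\cap X''$ versus $X''\setminus\Ch_h(X')$, which is the same observation as your running-total argument), and COM from Dantzig's fractional-knapsack optimality of the density prefix at capacity $w_h(\Ch_h(X'))\ge B_h$. Your explicit induction for IRC is more detailed than the paper, which simply asserts it, but the substance is identical.
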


\begin{proof}
IRC clearly follows from the definition of the choice functions. 
Next, we claim that the choice functions satisfy SUB. 
Let $X''\subseteq X'\subseteq X_h$. 
By definition, 
the utility per unit wage of any contract in $\Ch_h(X')~(\supseteq \Ch_h(X')\cap X'')$
is higher than
that of any contract in $X'\setminus \Ch_h(X')~(\supseteq X''\setminus \Ch_h(X'))$. 
Hence, we can partition $X''$ into two subsets: 
$H=\Ch_h(X')\cap X''$ and $L=X''\setminus \Ch_h(X')$. 
Any contract in $H$ has a higher utility per unit wage than any contract in $L$. 
When $\Ch_h$ takes $X''$ as an input,
it first picks all of the contracts in $H$ and some contracts in $L$. 
Therefore, we obtain $\Ch_h(X')\cap X''\subseteq \Ch_h(X'')$ 
and derive the SUB condition: 
\begin{align*}
  X''\setminus \Ch(X'') & \subseteq X''\setminus(\Ch_h(X')\cap X'') \subseteq X'\setminus\Ch_h(X'). 
\end{align*}

Finally, we prove COM. 
Let $X'=\{x^{(1)},\dots,x^{(|X'|)}\}\subseteq X_h$, 
where the contracts are arranged in decreasing order of the utility per unit wage. 
If $w_h(X')\le B_h$, 
then it is clear that $\Ch_h(X')=X'$ and $f_h(\Ch_h(X'))\ge f_h(X'')$ hold for any $X''\subseteq X'$.
Otherwise, 
let $\Ch_h(X')=\{x^{(1)},\dots,x^{(k)}\}$. 
Here, \[w_h(\{x^{(1)},\dots,x^{(k-1)}\})< B_h\le w_h(\{x^{(1)},\dots,x^{(k)}\})\] holds. 
As described in Lemma~\ref{lem:spchprop}, 
since the greedy solution $\Ch_h(X')$ is optimal, 
we have 
$f_h(\Ch_h(X'))\ge f_h(X'')$
for any $X''\subseteq X'$ such that $w_h(X'')\le w_h(\Ch_h(X'))$.
Thus, the lemma holds.
\end{proof}
Demonstrating that Algorithm~\ref{alg:nspchoice} does not satisfy LAD is 
straightforward. In Example \ref{ex:mechanism 1}, 
when a set of contracts $\{(d_2, h_1, 50), (d_3, h_1, 42), (d_4, h_1, 55)\}$ is given,
the choice function chooses all three contracts. 
Here, if $(d_1,h_1,57)$ is further added, it chooses only two contracts, 
specifically, $\{(d_1, h_1, 57), (d_2, h_1, 50)\}$.
Thus, the second mechanism fails to satisfy LAD.

We next show an upper bound of the increment of the budgets. 
Since our choice function chooses a set of contracts 
that exceeds the capacity by at most one contract,
we have the following lemma: 
\begin{lemma}\label{lem:nspchapprox}
  For each choice function defined in Algorithm~\ref{alg:nspchoice} 
  and a set of contracts $X'\subseteq X_h$, it holds that
  \begin{align*}
    w_h(\Ch_h(X'))< B_h+\upperW_h.
  \end{align*}
\end{lemma}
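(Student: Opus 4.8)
The plan is to track the total wage $w_h(Y)$ of the partial solution $Y$ as Algorithm~\ref{alg:nspchoice} scans the sorted list of contracts, exploiting the fact that a new contract is appended to $Y$ only while the test $w_h(Y)<B_h$ succeeds. I would first dispose of the trivial case: if $w_h(X')<B_h$, then every contract passes the test, so $\Ch_h(X')=X'$ and $w_h(\Ch_h(X'))=w_h(X')<B_h<B_h+\upperW_h$.

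For the remaining case $w_h(X')\ge B_h$, write $X'=\{x^{(1)},\dots,x^{(|X'|)}\}$ in decreasing order of utility per unit wage, as in the algorithm. I would observe that $w_h(Y)$ is nondecreasing during the loop (wages are positive), so once it reaches $B_h$ the test fails forever and no further contract is added; hence the output is a prefix $\Ch_h(X')=\{x^{(1)},\dots,x^{(k)}\}$, where $k\ge 1$ is the least index with $w_h(\{x^{(1)},\dots,x^{(k)}\})\ge B_h$ (such a $k$ exists and satisfies $k\le|X'|$ since $w_h(X')\ge B_h$). By minimality of $k$ we have $w_h(\{x^{(1)},\dots,x^{(k-1)}\})<B_h$, and since $x^{(k)}_W\le\upperW_h$ by the definition of $\upperW_h$, we conclude
\[
w_h(\Ch_h(X'))=w_h(\{x^{(1)},\dots,x^{(k-1)}\})+x^{(k)}_W<B_h+\upperW_h.
\]

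I do not expect a genuine obstacle here; the only point needing a word of care is justifying that the accepted set is a prefix, i.e.\ that the monotone growth of $w_h(Y)$ makes the if-test, once failed, fail for all later contracts, which is immediate from positivity of wages. The corner cases $X'=\emptyset$ and $k=|X'|$ are both covered by the first paragraph.
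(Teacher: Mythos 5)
Your proof is correct and matches the paper's (implicit) argument: the paper states this lemma without a formal proof, justifying it only by the remark that the choice function exceeds the capacity by at most one contract, which is exactly the prefix-plus-one-overshoot observation you formalize. (The only nit: the case $k=|X'|$ is handled by your second paragraph, not the first, but the argument there covers it without change.)
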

\begin{proof}
  Sort contracts in $X'$ in descending order of utility per unit wage. 
  We obtain $X'=\{x^{(1)},\dots,x^{(|X'|)}\}\subseteq X_h$. 
  %
  If $w_h(X')\le B_h$, then it is clear that $\Ch_h(X')=X'$ and
  $w_h(\Ch_h(X'))=w_h(X')\le B_h\le B_h+\upperW_h$. 
  Otherwise, 
  when $\Ch_h(X')$ chooses $k$ contracts, i.e., $\Ch_h(X')=\{x^{(1)},\dots,x^{(k)}\}$, 
  {\small\[ w_h(\{x^{(1)},\dots,x^{(k-1)}\})< B_h\le w_h(\{x^{(1)},\dots,x^{(k)}\})\]}
  holds. Then we obtain
  {\small\begin{align*}
    w_h(\Ch_h(X')) & = w_h(\{x^{(1)},\dots,x^{(k)}\}) \\
    & =w_h(\{x^{(1)},\dots,x^{(k-1)}\})+w_h(\{x^{(k)}\}) \\
    & < B_h+\upperW_h, 
  \end{align*}}
  which proves the lemma.
\end{proof}

We summarize the results for our second mechanism follows: 
\begin{theorem}
  The generalized DA mechanism with the choice functions defined in Algorithm~\ref{alg:nspchoice} produces a set of contracts $X'$ 
  that is $B'_H$-stable where $B_h\le B_h'< B_h+\upperW_h$ for any $h\in H$.
  In addition, the mechanism can be implemented to run in $O(|X|\log|X|)$ time. 
\end{theorem}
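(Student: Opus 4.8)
The plan is to assemble this final theorem directly from the lemmas and theorems already established in the excerpt, so the proof is essentially a three-part verification. First I would invoke Lemma~\ref{lem:nspchprop}, which tells us that for each hospital the choice function of Algorithm~\ref{alg:nspchoice} satisfies SUB, IRC, and COM. This is exactly the hypothesis of Theorem~\ref{thm:compatible2stable}, so applying that theorem to the generalized DA run with these choice functions yields a matching $X'$ that is $B'_H$-stable with $B'_h=\max\{B_h,w_h(X')\}$ for each $h\in H$. In particular $B'_h\ge B_h$, giving the lower inequality $B_h\le B'_h$ for free.

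Next I would establish the upper bound $B'_h<B_h+\upperW_h$. The key observation is that $B'_h=\max\{B_h,w_h(X')\}$, so I need $w_h(X')<B_h+\upperW_h$. Since $X'$ is the output of the generalized DA, $X'_h=\Ch_h(X'_h\cup R_h^{(l)})$ is itself a value of the choice function applied to some set of contracts, and Lemma~\ref{lem:nspchapprox} states precisely that $w_h(\Ch_h(X''))<B_h+\upperW_h$ for any input $X''\subseteq X_h$. Hence $w_h(X'_h)<B_h+\upperW_h$, and combining with $B_h<B_h+\upperW_h$ (which holds since $\upperW_h>0$) gives $B'_h=\max\{B_h,w_h(X')\}<B_h+\upperW_h$. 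One small point to state carefully is why $X'_h$ equals $\Ch_h$ of something: this follows from the termination condition $X'=Y^{(l)}=Z^{(l)}=\Ch_H(Y^{(l)})$, so $X'_h=\Ch_h(Y^{(l)}_h)$, and Lemma~\ref{lem:nspchapprox} applies with $X''=Y^{(l)}_h$.

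Finally I would address the running time. The choice function of Algorithm~\ref{alg:nspchoice} sorts $X'$ by utility per unit wage and makes a single greedy pass, and as already remarked in the text, using a min-heap per hospital keyed on utility per unit wage lets us insert a newly proposed contract or delete a rejected contract in $O(\log|X_h|)$ time. Summing the amortized cost over all contracts and all hospitals gives $O(\sum_{h\in H}|X_h|\log|X_h|)=O(|X|\log|X|)$, identical to the analysis for the first mechanism.

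I do not anticipate any genuine obstacle here — the theorem is a corollary of the already-proved Theorem~\ref{thm:compatible2stable} together with Lemmas~\ref{lem:nspchprop} and~\ref{lem:nspchapprox}. The only thing requiring a sentence of care is the bookkeeping that connects the abstract statement of Theorem~\ref{thm:compatible2stable} (where $B'_h=\max\{B_h,w_h(X')\}$) to the concrete bound $B_h\le B'_h<B_h+\upperW_h$, which reduces to noting $0<\upperW_h$ and applying Lemma~\ref{lem:nspchapprox} to the final-iteration input. If desired, one could also remark that the bound is tight by Theorem~\ref{thm:lowerapprox}, since no mechanism can guarantee a $B'_H$-stable matching with every $B'_h$ bounded by $(1+\alpha)B_h$ for all $\alpha<1$, i.e.\ some hospital must be allowed roughly an extra $\upperW_h$.
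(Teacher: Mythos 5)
Your proposal is correct and follows exactly the route the paper intends: the theorem is stated as a summary of Lemma~\ref{lem:nspchprop} (SUB, IRC, COM) fed into Theorem~\ref{thm:compatible2stable}, with the upper bound $B'_h<B_h+\upperW_h$ supplied by Lemma~\ref{lem:nspchapprox} applied to the final-round input $Y^{(l)}_h$, and the running time handled by the same heap argument as for the first mechanism. No gaps.
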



\begin{modified}
The following confirms that the mechanism is not strategy-proof. 
Consider a market with three doctors $D=\{d_1,d_2,d_3\}$ and two hospitals $H=\{h_1,h_2\}$. 
The set of offered contracts is 
\begin{align*}
X=\{(d_1,h_1,1),(d_1,h_2,1),(d_2,h_1,2),(d_2,h_2,1),(d_3,h_1,1),(d_1,h_3,1)\}.
\end{align*}
The doctors' preferences are given as follows: 
\begin{align*}
  & \succ_{d_1}:~ (d_1,h_1,1)\succ_{d_1} (d_1,h_2,1),\\
  & \succ_{d_2}:~ (d_2,h_2,1)\succ_{d_2} (d_2,h_1,2),\\
  & \succ_{d_3}:~ (d_3,h_2,1)\succ_{d_3} (d_3,h_1,1). 
\end{align*}
The utility of each contract is given as follows:
\begin{align*}
f_{h_1}&(\{(d_1,h_1,1)\})=1,\ f_{h_1}(\{(d_2,h_1,2)\})=10,\ f_{h_1}(\{(d_3,h_1,1)\})=1,\\
f_{h_2}&(\{(d_1,h_2,1)\})=3,\ f_{h_2}(\{(d_2,h_2,1)\})=1,\ \text{and}\ f_{h_2}(\{(d_3,h_2,1)\})=2.
\end{align*}
The budgets of hospitals $h_1$ and $h_2$ are $B_{h_1}=2$ and $B_{h_2}=1$, respectively.

For this market, our mechanism outputs a stable-matching $X'=\{(d_1,h_2,1),(d_2,h_1,2)\}$.
On the other hand, if $d_3$ misreports his or her preference as $(d_3,h_1,1)\succ_{d_3}' (d_3,h_2,1)$,
the outcome of the mechanism is $X''=\{(d_1,h_1,1),\allowbreak (d_2,h_2,1),(d_3,h_1,1)\}$.
Thus, the mechanism is not strategy-proof since $d_3$ has an incentive to misreport.
\end{modified}


\subsection{Non-Existence of Doctor-Optimal Stable Matchings}
\label{sec:doctor-optimality}

It is known that a doctor-optimal stable matching often fails to exist 
in matchings with distributional constraints~\cite{kamada:aer:2015}. 
Unfortunately, this also holds in our matching problem, 
even if there is a stable matching that exactly satisfies the budget constraints. 
To illustrate this, let us consider a market with four doctors $D=\{d_1,d_2,d_3,d_4\}$ and two hospitals $H=\{h_1,h_2\}$. 
The set of offered contracts is 
\[ X=\{(d_1,h_1,1), (d_2,h_1,2), (d_3,h_1,1), (d_4,h_1,1), (d_3,h_2,1),(d_4,h_2,1)\}.\]
Suppose that the doctors' preferences are 
\begin{align*}\notag
  & \succ_{d_1}:~ (d_1,h_1,1),\\
  & \succ_{d_2}:~ (d_2,h_1,2),\\
  & \succ_{d_3}:~ (d_3,h_1,1)\succ_{d_3}(d_3,h_2,1),\ \mathrm{and}\\
  & \succ_{d_4}:~ (d_4,h_2,1)\succ_{d_4}(d_4,h_1,1).
\end{align*}
The utility of each contract is 
\begin{align*}
f_{h_1}&(\{(d_1,h_1,1)\})=7,\ f_{h_1}(\{(d_2,h_1,2)\})=6,\ f_{h_1}(\{(d_3,h_1,1)\})=1, f_{h_1}(\{(d_4,h_1,1)\})=4, \\
f_{h_2}&(\{(d_3,h_2,1)\})=2,\ \text{and}\ f_{h_2}(\{(d_4,h_2,1)\})=1. 
\end{align*}
The budgets of hospitals $h_1$ and $h_2$ are $B_{h_1}=2$ and $B_{h_2}=1$, respectively.

For this market, our mechanisms output a stable matching $X'=\{(d_1,h_1,1),(d_3,h_2,1), (d_4,h_1,1)\}$
that exactly satisfies the budget constraints. The matching clearly improves without hurting anyone 
if $d_3$ and $d_4$ exchange their positions. Since $d_3$ prefers $h_1$ to $h_2$ and $d_4$ prefers $h_2$ to $h_1$,
they unanimously prefer another stable matching $X''=\{(d_1,h_1,1), (d_3,h_1,1),\allowbreak (d_4,h_2,1)\}$.  

\begin{modified}
\subsection{Two Special Cases}
\label{sec:special cases}
This section examines two special hospital utilities cases. 
%
First, we assume that each hospital has utility over a set of contracts 
that is proportional to the total amount of wages. 
Formally, for every $h\in H$, $X'\in X_h$, and a constant $\gamma_h(>0)$, 
\[f_h(X')=\gamma_h\cdot w_h(X')\]
holds. In this case, we can make the second mechanism strategy-proof without sacrificing the budget bound
although stable matching may not exist as in Example \ref{ex:nonexistence}.
Specifically, we modify Algorithm~\ref{alg:nspchoice} to 
i) sort $X'$ in increasing order of wage, instead of decreasing order of utility per unit wage; %
ii) pick the contracts in order while keeping the total wage less than $B_h$; and 
iii) add the contract with the highest wage unless it is already chosen. 
Formally, we define this as Algorithm \ref{alg:proputil}. 

\begin{algorithm}
\SetKwInOut{Input}{input}\Input{$X'\subseteq X_h$\quad\textbf{output:} $\Ch_h(X')$}
\caption{}\label{alg:proputil}
Initialize $Y\ot \emptyset$\;
Sort $X'$ in increasing order of wages\; %
\For{$i=1,2,\dots,|X'|-1$}{
  let $x$ be the $i$th contract in $X'$\;
  \lIf{$w_h(Y\cup\{x\}) < B_h$}{$Y\ot Y\cup\{x\}$}
}
add the $|X'|$th contract (highest wage contract) to $Y$\;
\Return $Y$\;
\end{algorithm}

\begin{lemma}\label{lem:proputil}
  For each hospital $h$, the choice function defined in 
  Algorithm~\ref{alg:proputil} is SUB, IRC, LAD, and COM. 
\end{lemma}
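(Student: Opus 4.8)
The plan is to begin from an explicit description of the output of Algorithm~\ref{alg:proputil}. Fix a hospital $h$ and a set $X'\subseteq X_h$, and write its contracts as $x_1,\dots,x_n$ in nondecreasing order of wage (ties broken by the fixed order of doctors), with $w_i$ denoting the wage of $x_i$, so $w_1\le\dots\le w_n$. The loop of Algorithm~\ref{alg:proputil} accepts a \emph{prefix} of this order: once some $x_i$ is skipped because $w_h(Y)+w_i\ge B_h$, every later (hence weakly heavier) contract is skipped as well, since $w_h(Y)$ does not change. So, letting $j(X')$ be the largest $j$ with $0\le j\le n-1$ and $\sum_{i=1}^{j}w_i<B_h$ (empty sum $=0$), the loop accepts exactly $x_1,\dots,x_{j(X')}$, after which the final line adds the heaviest contract $x_n$; hence
\[
\Ch_h(X')=\{x_1,\dots,x_{j(X')}\}\cup\{x_n\},\qquad |\Ch_h(X')|=j(X')+1,
\]
with $\Ch_h(\emptyset)=\emptyset$, and $\Ch_h(X')=X'$ exactly when $j(X')=n-1$. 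I would record this description together with the elementary fact that if $X''\subseteq X'$ then, for every $i$, the $i$-th lightest contract of $X'$ is no heavier than the $i$-th lightest contract of $X''$, so partial sums of the lightest contracts only increase upon passing to a subset.

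SUB, IRC and LAD are then purely combinatorial (they do not involve $f_h$). For \textbf{SUB}, take $X''\subseteq X'$ and $x\in\Ch_h(X')\cap X''$. If $x=x_n$ is the heaviest contract of $X'$, it is also the heaviest of $X''$, hence in $\Ch_h(X'')$. Otherwise $x$ is among the $j(X')$ lightest of $X'$, say of rank $u\le j(X')$, so the $u$ lightest contracts of $X'$ have total wage below $B_h$; the contracts of $X''$ up to and including $x$ form a subset of these, so if $t\le|X''|$ is the rank of $x$ in $X''$, the $t$ lightest of $X''$ have total wage below $B_h$, giving $t\le j(X'')$ when $t\le|X''|-1$ and $x=\max X''$ when $t=|X''|$; either way $x\in\Ch_h(X'')$. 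This yields $\Ch_h(X')\cap X''\subseteq\Ch_h(X'')$, equivalently the SUB inclusion. For \textbf{IRC}, suppose $\Ch_h(X'\cup X'')\subseteq X'$. Then the heaviest contract of $X'\cup X''$ lies in $X'$, and no contract of $X''$ is among the $j(X'\cup X'')$ lightest, so the $j(X'\cup X'')$ lightest contracts of $X'\cup X''$ coincide with those of $X'$; a two-sided partial-sum comparison forces $j(X')=j(X'\cup X'')$, hence $\Ch_h(X')=\Ch_h(X'\cup X'')$. For \textbf{LAD}, since $|\Ch_h(X')|=j(X')+1$ it suffices to show $X''\subseteq X'\Rightarrow j(X'')\le j(X')$; but the $j(X'')$ lightest contracts of $X'$ have total wage at most that of the $j(X'')$ lightest of $X''$, which is below $B_h$, and $j(X'')\le|X''|-1\le|X'|-1$, so $j(X'')$ is admissible in the definition of $j(X')$.

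\textbf{COM} is the only property that uses $f_h=\gamma_h\cdot w_h$. Dividing by $\gamma_h>0$, it amounts to showing $w_h(\Ch_h(X'))\ge w_h(X'')$ whenever $X''\subseteq X'$ and $w_h(X'')\le\max\{B_h,w_h(\Ch_h(X'))\}$. If $w_h(\Ch_h(X'))\ge B_h$, the hypothesis is simply $w_h(X'')\le w_h(\Ch_h(X'))$ and there is nothing to prove. If $w_h(\Ch_h(X'))<B_h$, I would first note this forces $\Ch_h(X')=X'$: otherwise $j(X')\le n-2$, so by maximality $\sum_{i=1}^{j(X')+1}w_i\ge B_h$, and then $w_h(\Ch_h(X'))=\sum_{i=1}^{j(X')}w_i+w_n\ge\sum_{i=1}^{j(X')}w_i+w_{j(X')+1}\ge B_h$ (using $w_n\ge w_{j(X')+1}$), a contradiction. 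Hence $\Ch_h(X')=X'\supseteq X''$ and $w_h(X'')\le w_h(X')=w_h(\Ch_h(X'))$.

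The routine work is the rank and partial-sum bookkeeping; the point that needs the most care is the characterization of $\Ch_h$, in particular tracking the unconditional inclusion of the heaviest contract — this is why $\Ch_h(X')$ is ``the lightest contracts fitting below the budget \emph{plus} one heaviest contract'' rather than just an initial segment of the sorted order — since SUB, IRC and LAD are all read off from it and the COM argument hinges on the observation that $w_h(\Ch_h(X'))<B_h$ can occur only when $\Ch_h(X')=X'$.
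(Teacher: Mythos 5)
Your proof is correct and follows essentially the same route as the paper's: both characterize the output of Algorithm~\ref{alg:proputil} as the longest light-wage prefix whose total is below $B_h$ together with the heaviest contract, read off SUB and LAD from the resulting rank and partial-sum comparisons, and settle COM by noting that $w_h(\Ch_h(X'))<B_h$ can only happen when $\Ch_h(X')=X'$. The only cosmetic difference is that you verify IRC directly from this characterization, whereas the paper deduces it from the known fact that SUB and LAD together imply IRC.
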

\begin{proof}
  Let $X'$ be $\{x^{(1)},\dots,x^{(|X'|)}\}\subseteq X_h$ such that 
  $x_{W}^{(1)}<\dots<x_{W}^{(|X'|)}$ and 
  let $k$ be the largest integer that satisfies $\sum_{i=1}^{k}x_W^{(i)}< B_h$ and $k\le |X'|-1$.
  Also, let $X''$ be a subset of $X'$ that contains $s~(=|X''|)$ contracts, that is, 
  $X''=\{x^{(j_1)},\dots,x^{(j_s)}\}$ such that $1\le j_1<\dots<j_s\le |X'|$.
  Let $k'$ be the largest integer that satisfies $\sum_{i=1}^{k'}x_W^{(j_i)}< B_h$ and $k'\le s-1$.
  Then, we have $k'\le k\le j_{k'}$ by $x_W^{(i)}\le x_W^{(j_i)}$ for all $i$. 

  From this setup, we obtain 
  \begin{align*}
    \Ch_h (X')\cap X''
          & = \{x^{(1)},\dots,x^{(k)},x^{(|X'|)}\} \cap\{x^{(j_1)},\dots,x^{(j_s)}\}\\
          & \subseteq \{x^{(j_1)},\dots,x^{(j_{k'})},x^{(j_s)}\}=\Ch_h(X''), 
  \end{align*}
  which is equivalent to the SUB condition. We also obtain 
  \[
  |\Ch_h(X'')| = k'+1 \le  k+1 = |\Ch_h(X')|, 
  \]
  which implies that $\Ch_h$ is LAD. 
  Furthermore, the fact that the choice function for $h$ satisfies SUB and LAD simultaneously implies IRC. 

  Finally, we prove COM. 
  First, let us consider a matching $X'$ whose total wages for hospital $h$ is less than or equal to $B_h$, 
  i.e., $w_h(X')\le B_h$. In this case, since the choice function picks all contracts in $X'$, 
  we have 
  \begin{align*}
    f_h(\Ch_h(X')) = f_h(X') 
                   = \max\{f_h(X''')\mid X'''\subseteq X',~w_h(X''')\le B_h\}.
  \end{align*}
  Second, let us consider a matching $X'$ whose total wages for hospital $h$ is greater than $B_h$, 
  i.e., $w_h(X')>B_h$.
  In this case, $w_h(\Ch_h(X'))\ge B_h$ holds by the definition.
  Hence, we have 
  \begin{align*}
  f_h(\Ch_h(X')) = \max\{ f_h&(X''')\mid X'''\subseteq X',~w_h(X''')\le w_h(\Ch_h(X'))\}.
  \end{align*}
  Thus, $\Ch_h$ is COM. 
\end{proof}

\begin{lemma}\label{lem:propbound}
  For each choice function defined in Algorithm \ref{alg:proputil} and 
  a set of contracts $X'\subseteq X_h$, it holds that 
  \begin{align*}
    w_h(\Ch_h(X'))< B_h+\upperW_h. 
  \end{align*}
\end{lemma}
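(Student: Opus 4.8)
The plan is to identify exactly the set of contracts that Algorithm~\ref{alg:proputil} returns and then bound its total wage directly. Following the notation from the proof of Lemma~\ref{lem:proputil}, I would write $X' = \{x^{(1)}, \dots, x^{(|X'|)}\}$ with the contracts sorted so that $x^{(1)}_W \le \dots \le x^{(|X'|)}_W$. The case $X' = \emptyset$ is immediate since $B_h, \upperW_h > 0$, so I would assume $|X'| \ge 1$ throughout.

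First I would show that, at the end of the \texttt{for}-loop (before the last line, which appends the highest-wage contract), the set $Y$ is precisely the prefix $\{x^{(1)}, \dots, x^{(k)}\}$, where $k$ is the largest index satisfying $k \le |X'| - 1$ and $\sum_{i=1}^{k} x^{(i)}_W < B_h$; in particular $w_h(Y) < B_h$ at that point. The one observation that makes this work is monotonicity of the sorted order: as soon as the loop first fails to add some $x^{(i)}$ --- i.e.\ $w_h(Y) + x^{(i)}_W \ge B_h$ --- then $Y$ is frozen, and for every later index $j > i$ we have $w_h(Y) + x^{(j)}_W \ge w_h(Y) + x^{(i)}_W \ge B_h$, so no further contract is ever added. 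Hence $Y$ cannot ``skip and then resume,'' and it is a genuine prefix with total wage strictly below $B_h$.

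Then I would finish by noting that the loop only inspects $x^{(1)}, \dots, x^{(|X'|-1)}$, so $x^{(|X'|)} \notin Y$ after the loop and the final line genuinely adds it; therefore $\Ch_h(X') = \{x^{(1)}, \dots, x^{(k)}\} \cup \{x^{(|X'|)}\}$ and
\[
  w_h(\Ch_h(X')) = w_h(\{x^{(1)}, \dots, x^{(k)}\}) + x^{(|X'|)}_W < B_h + \upperW_h,
\]
using $w_h(\{x^{(1)}, \dots, x^{(k)}\}) < B_h$ from the first step and $x^{(|X'|)}_W \le \upperW_h$ by definition of $\upperW_h$; the sub-case $k = |X'| - 1$ (where $\Ch_h(X') = X'$) is covered by the same inequality. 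I do not expect a genuine obstacle here, since the argument mirrors Lemma~\ref{lem:nspchapprox} for the second mechanism. The only points needing care are the prefix claim of the first step --- making sure the greedy loop cannot resume adding after a rejection because wages are sorted --- and the bookkeeping that the appended highest-wage contract is not already in $Y$, so that its wage, bounded by $\upperW_h$, is counted exactly once on top of the sub-$B_h$ prefix.
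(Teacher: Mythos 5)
Your proof is correct and follows essentially the same route as the paper's: identify the output as a below-budget prefix plus the single highest-wage contract, then bound the two pieces by $B_h$ and $\upperW_h$ respectively. The extra care you take (the prefix cannot ``skip and resume'' because wages are sorted, and the appended contract is counted exactly once) only fills in details the paper leaves implicit.
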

\begin{proof}
  As in the proof for Lemma~\ref{lem:proputil}, 
  let $X'$ be $\{x^{(1)},\dots,x^{(|X'|)}\}\subseteq X_h$ such that 
  $x_{W}^{(1)}<\dots<x_{W}^{(|X'|)}$
  and let $k$ be the largest integer that satisfies $\sum_{i=1}^{k}x_W^{(i)}\le B_h$ and $k\le |X'|-1$.
  Then, we have 
  \begin{align*}
    w_h(\Ch_h(X'))
     = w_h(\{x^{(1)},\dots,x^{(k)}\})+w_h(\{x^{(|X'|)}\})
     < B_h+\upperW_h. 
  \end{align*}
  Thus, $w_h(\Ch_h(X'))< B_h+\upperW_h$ holds for any $h\in H$.
\end{proof}

From these two lemmas, we get the following theorem.
\begin{theorem}
  The generalized DA mechanism with the choice functions defined in Algorithm~\ref{alg:proputil} is
  strategy-proof for doctors and
  produces a $B_H'$-stable matching such that $B_h\le B_h'< B_h+\upperW_h$ for any $h\in H$
  when each hospital has utility over a set of contracts that is proportional to the total amount of wages. 
\end{theorem}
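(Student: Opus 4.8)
The plan is to assemble the theorem from ingredients that are already in hand, since under the proportional-utility hypothesis all of the substantive work has been pushed into Lemmas~\ref{lem:proputil} and~\ref{lem:propbound}. First I would invoke Lemma~\ref{lem:proputil} to record that, when $f_h(X')=\gamma_h\cdot w_h(X')$ for every $h\in H$, the choice function $\Ch_h$ of Algorithm~\ref{alg:proputil} simultaneously satisfies SUB, IRC, LAD, and COM.

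For stability and the budget bound I would apply Theorem~\ref{thm:compatible2stable}: because every $\Ch_h$ is SUB, IRC, and COM, the generalized DA terminates with a matching $X'$ that is $B'_H$-stable for $B'_h=\max\{B_h,w_h(X')\}$. It then remains to check $B_h\le B'_h<B_h+\upperW_h$. The lower bound is immediate from the definition of $B'_h$. For the upper bound I would use the termination condition $X'=Y^{(l)}=Z^{(l)}=\Ch_H(Y^{(l)})$ of Algorithm~\ref{alg:generalizedDA} — equivalently the identity $\Ch_h(X'_h\cup R^{(l)}_h)=X'_h$ established inside the proof of Theorem~\ref{thm:compatible2stable} — to express $X'_h$ as the image of $\Ch_h$ on a subset of $X_h$; Lemma~\ref{lem:propbound} then gives $w_h(X')=w_h(\Ch_h(\cdot))<B_h+\upperW_h$, hence $B'_h<B_h+\upperW_h$ for every $h\in H$.

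For strategy-proofness I would appeal to the Hatfield–Milgrom characterization quoted in the preliminaries: when $\Ch_H$ satisfies SUB, IRC, and LAD, the (doctor-proposing) generalized DA is strategy-proof for doctors. Since inflating the budgets of hospitals does not change any doctor's preference list, this conclusion is unaffected by the fact that our mechanism returns a near-feasible rather than a $B_H$-feasible matching. Combining the two conclusions yields the theorem.

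The proof is thus essentially a bookkeeping exercise, and the only point that needs a moment's care is the translation step in the budget bound, where one must be sure that the output $X'_h$ really is of the form $\Ch_h(Y)$ for some $Y\subseteq X_h$ so that Lemma~\ref{lem:propbound} applies — this is exactly what the fixed-point/termination condition of the generalized DA supplies. All of the genuinely delicate content — that LAD survives the ``add the highest-wage contract'' modification (which it failed to do for Algorithm~\ref{alg:nspchoice}) and that COM nonetheless still holds — lives in Lemma~\ref{lem:proputil} and is taken as given here.
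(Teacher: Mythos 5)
Your proposal is correct and matches the paper's (very terse) argument: the paper simply states ``From these two lemmas, we get the following theorem,'' and your write-up spells out exactly the intended assembly — Lemma~\ref{lem:proputil} plus Theorem~\ref{thm:compatible2stable} for stability, the termination identity $X'_h=\Ch_h(Y^{(l)}_h)$ so that Lemma~\ref{lem:propbound} bounds $w_h(X')$, and the Hatfield--Milgrom SUB/IRC/LAD criterion for strategy-proofness. No gaps; this is the same route, just made explicit.
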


We further examine what happens if we give up strategy-proofness.
We can construct a stable mechanism that may increase the budget by a factor of up to one-half. 
This bound improves the previous bound when the maximum wage $\upperW_h$ is larger than $B_h/2$.
We here modify Algorithm~\ref{alg:nspchoice} to 
i) sort $X'$ in increasing order of wage; 
ii) pick the contract with the highest wage; and 
iii) pick the remaining contracts in order while keeping the total wage less than $1.5B_h$. 
Formally, we define this modification as Algorithm \ref{alg:proputil2}. 

\begin{algorithm}
\SetKwInOut{Input}{input}\Input{$X'\subseteq X_h$\quad\textbf{output:} $\Ch_h(X')$}
\caption{}\label{alg:proputil2}
Initialize $Y\ot \emptyset$\;
Sort $X'$ in increasing order of wages\; %
add the $|X'|$th contract (highest wage contract) to $Y$\;
\For{$i=1,2,\dots,|X'|-1$}{
  let $x$ be the $i$th contract in $X'$\;
  \lIf{$w_h(Y\cup\{x\}) < 1.5\cdot B_h$}{$Y\ot Y\cup\{x\}$}
}
\Return $Y$\;
\end{algorithm}

\begin{lemma}
  The generalized DA with the choice functions defined in Algorithm \ref{alg:proputil2} 
  is SUB, IRC, and COM. 
\end{lemma}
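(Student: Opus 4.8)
The plan is to mimic the prefix-based reasoning behind Lemmas~\ref{lem:proputil} and~\ref{lem:nspchprop}, adjusted to the rule that first reserves the highest-wage contract and then fills up to the inflated bound $1.5\cdot B_h$. Note that LAD genuinely fails here (adding a large-wage contract can evict several small ones), so, unlike for Algorithm~\ref{alg:proputil}, IRC cannot be obtained from ``SUB $+$ LAD'' and must be argued on its own. Fix a hospital $h$, and for $X'\subseteq X_h$ write $X'=\{x^{(1)},\dots,x^{(n)}\}$ with $x^{(1)}_W\le\dots\le x^{(n)}_W$ (ties broken by a fixed order of doctors, so the ordering is consistent across subsets). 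First I would record the form of $\Ch_h(X')$: the contract $x^{(n)}$ is always taken since $x^{(n)}_W\le\upperW_h\le B_h<1.5\cdot B_h$, and rejections in the loop propagate upward (if $x^{(i)}$ is skipped because $w_h(Y)+x^{(i)}_W\ge 1.5\cdot B_h$ then $Y$ does not grow and $x^{(i+1)}_W\ge x^{(i)}_W$, so $x^{(i+1)}$ is skipped too, and so on). Hence $\Ch_h(X')=\{x^{(1)},\dots,x^{(k)}\}\cup\{x^{(n)}\}$, where $k\le n-1$ is the largest index with $x^{(n)}_W+\sum_{i=1}^{k}x^{(i)}_W<1.5\cdot B_h$.

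For SUB, take $X''\subseteq X'$ and $x\in\Ch_h(X')\cap X''$; it suffices to prove $x\in\Ch_h(X'')$, which is equivalent to the SUB inequality for the pair $X''\subseteq X'$. If $x=x^{(n)}$ then $x$ is the highest-wage contract of $X''\subseteq X'$, so it is reserved by $\Ch_h(X'')$. Otherwise $x=x^{(j)}$ for some $j\le k$, and $x^{(j)}$ is processed inside the loop on input $X''$. At that moment the running total equals the wage of the reserved contract of $X''$ (which is at most $x^{(n)}_W$) plus the wages of some contracts of $X''$ lying in $\{x^{(1)},\dots,x^{(j-1)}\}$, so the running total plus $x^{(j)}_W$ is at most $x^{(n)}_W+\sum_{i=1}^{j}x^{(i)}_W\le x^{(n)}_W+\sum_{i=1}^{k}x^{(i)}_W<1.5\cdot B_h$; thus $x^{(j)}$ is selected. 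This yields $\Ch_h(X')\cap X''\subseteq\Ch_h(X'')$, i.e., $X''\setminus\Ch_h(X'')\subseteq X'\setminus\Ch_h(X')$.

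For IRC it suffices, by removing unchosen contracts one at a time, to show $\Ch_h(Z\setminus\{x\})=\Ch_h(Z)$ whenever $x\in Z\subseteq X_h$ and $x\notin\Ch_h(Z)$. Since the maximum of $Z$ always belongs to $\Ch_h(Z)$, $x$ is not the maximum, so $Z$ and $Z\setminus\{x\}$ have the same reserved contract; the contracts of wage below $x_W$ are identical and processed identically in the two runs, adding the same set, and because $x$ is skipped in the run on $Z$ (its running total would reach $1.5\cdot B_h$), every subsequently processed contract — all of wage at least $x_W$ — is skipped in both runs. Hence $\Ch_h(Z\setminus\{x\})=\Ch_h(Z)$. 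For COM, since $f_h=\gamma_h\cdot w_h$, it is equivalent to show that $X''\subseteq X'$ with $w_h(X'')\le\max\{B_h,w_h(\Ch_h(X'))\}$ implies $w_h(X'')\le w_h(\Ch_h(X'))$. If $w_h(\Ch_h(X'))\ge B_h$ or $\Ch_h(X')=X'$ this is immediate; otherwise some $x^{(k+1)}$ was rejected, so $w_h(\Ch_h(X'))+x^{(k+1)}_W\ge 1.5\cdot B_h$ and hence $x^{(k+1)}_W>0.5\cdot B_h$, which forces every contract of $\{x^{(k+1)},\dots,x^{(n)}\}$ to have wage above $0.5\cdot B_h$; therefore $X''$ (total wage at most $B_h$) contains at most one of them. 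If it contains none then $X''\subseteq\{x^{(1)},\dots,x^{(k)}\}$ and $w_h(X'')\le w_h(\Ch_h(X'))-x^{(n)}_W<w_h(\Ch_h(X'))$; if it contains exactly one, say $x^{(j)}$, then $X''\setminus\{x^{(j)}\}\subseteq\{x^{(1)},\dots,x^{(k)}\}$ gives $w_h(X'')\le\sum_{i=1}^{k}x^{(i)}_W+x^{(j)}_W\le\sum_{i=1}^{k}x^{(i)}_W+x^{(n)}_W=w_h(\Ch_h(X'))$.

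I expect SUB to be the main obstacle. Unlike in Algorithm~\ref{alg:proputil}, the capacity left for the low-wage prefix depends on $X'$ through its largest wage, so restricting to $X''$ can simultaneously change the reserved contract and shrink the residual budget; the argument goes through only because reserving the maximum bounds the new reserved wage by the old one, and because the bound $1.5\cdot B_h$ (rather than $B_h$) is loose enough to absorb that extra reserved contract — and the same $0.5\cdot B_h$ of slack is exactly what powers the ``at most one oversized contract'' count in COM.
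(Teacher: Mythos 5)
Your proof is correct and follows essentially the same route as the paper's: the prefix structure of the greedy (with the reserved maximum of a subset having wage no larger than that of the superset) gives SUB, and COM rests on the same dichotomy — either $w_h(\Ch_h(X'))\ge B_h$, or every rejected contract has wage above $B_h/2$ so any $B_h$-feasible subset contains at most one of them — which the paper organizes into three cases via the threshold index $t$. The only substantive difference is that you supply an explicit run-comparison argument for IRC (correctly observing that LAD fails here, so IRC cannot be inherited from SUB and LAD), whereas the paper simply declares IRC evident.
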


\begin{proof}
  IRC is evident 
  by the definition of the choice function. 

  Let $X'=\{x^{(1)},\dots,x^{(|X'|)}\}\subseteq X_h$ such that
  \(x^{(1)}_W<\dots<x^{(|X'|)}_W\)
  and let $k$ be the largest integer that satisfies
  \(x_W^{|X'|}+\sum_{i=1}^{k}x_W^{(i)}< 1.5\cdot B_h\) and $k\le |X'|-1$.
  Also, let 
  \(X''=\{x^{(j_1)},\dots,x^{(j_s)}\}\subseteq X'\)
  such that $1\le j_1<\dots<j_s\le |X'|$ 
  and let $k'$ be the largest index that satisfies
  \(x_W^{j_s}+\sum_{i=1}^{k'}x_W^{(j_i)}< 1.5\cdot B_h\) and \(k'\le s-1\).
  We then have $k\le j_{k'}$ by $x_W^{(i)}\le x_W^{(j_{i})}$.  
  $\Ch_h$ is SUB because
  \begin{align*}
    \Ch_h(X')\cap X''
    &=\{x^{(1)},\dots,x^{(k)},x^{(|X'|)}\}
    \cap\{x^{(j_{1})},\dots,x^{(j_{s})}\}\\
    &\subseteq \{x^{(j_1)},\dots,x^{(j_{k'})},x^{(j_s)}\}=\Ch_h(X'').
  \end{align*}

  We next claim that $\Ch_h$ is COM.
  Let $t$ satisfy that
  \begin{align*}
    x^{(1)}_W<\dots<x^{(t)}_W \le 1/2 < x^{(t+1)}_W<\dots<x^{(|X'|)}_W.
  \end{align*}
  We consider the following three cases. 
  
  \noindent\textbf{Case 1:} $|\Ch_h(X')\cap \{x^{(t+1)},\dots,x^{(|X'|)}\}|\ge 2$. 
  In this case, since $w_h(\Ch_h(X'))>2\cdot B_h/2=B_h$, 
  we have 
  \begin{align*}
    f_h(\Ch_h(X'))  = \max\{&f_h(X''')\mid X'''\subseteq X',\ w_h(X''') \le w_h(\Ch_h(X'))\}.  
  \end{align*}

  \noindent\textbf{Case 2:} $k<t$. 
  In this case, since $w_h(\Ch_h(X'))+w_h(\{x^{(t)}\})>1.5\cdot B_h$, 
  we have \[w_h(\Ch_h(X'))>1.5\cdot B_h-w_h(\{x^{(t)}\})\ge B_h.\]
  Thus, it holds that
  \begin{align*}
    f_h(\Ch_h(X'))  = \max\{&f_h(X''')\mid X'''\subseteq X',\ w_h(X''') \le w_h(\Ch_h(X'))\}.
  \end{align*}
  
  \noindent\textbf{Case 3:} $k\ge t$ and $|\Ch_h(X')\cap \{x^{(t+1)},\dots,x^{(|X'|)}\}|\le 1$.
  As $x^{(|X'|)}\in \Ch_h(X')$, we have $k=t$ and $|\Ch_h(X')\cap \{x^{(t+1)},\dots,x^{(|X'|)}\}|=1$.
  If $w_h(\Ch_h(X'))\ge B_h$, we have
  \begin{align*}
    f_h(\Ch_h(X'))=\max\{&f_h(X''')\mid X'''\subseteq X',\ w_h(X''') \le w_h(\Ch_h(X'))\}.  
  \end{align*}
  Thus, we assume that $w_h(\Ch_h(X'))< B_h$.
  In this case, we obtain \[f_h(\Ch_h(X'))=\max\{f_h(X''')\mid X'''\subseteq X',~w_h(X''')\le B_h\}\]
  because $|X'''\cap \{x^{(t+1)},\dots,x^{(|X'|)}\}|\le 1$ holds for any $X'''\subseteq X'$ such that $w_h(X''')\le B_h$.

  
  The proof is thus complete. 
\end{proof}

\begin{lemma}
  For each choice function defined in Algorithm \ref{alg:proputil2} and
  a set of contracts $X'\subseteq X_h$, it holds that 
  \begin{align*}
    w_h(\Ch_h(X'))\le 1.5\cdot B_h.
  \end{align*}
\end{lemma}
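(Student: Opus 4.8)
The plan is to follow the execution of Algorithm~\ref{alg:proputil2} line by line and maintain the invariant that $w_h(Y) < 1.5\cdot B_h$ after every step that modifies $Y$; since $\Ch_h(X')$ is the value of $Y$ at termination, this immediately yields the claim (in fact with a strict inequality, marginally stronger than stated). I would set up notation as in the proof of Lemma~\ref{lem:proputil}, writing $X'=\{x^{(1)},\dots,x^{(|X'|)}\}$ with $x^{(1)}_W<\dots<x^{(|X'|)}_W$, so that the contract inserted before the loop is the unique maximum-wage contract $x^{(|X'|)}$ (the case $X'=\emptyset$ being trivial).

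For the base case, note that $x^{(|X'|)}$ is added without consulting the threshold test, so the invariant must be checked directly here: by the standing assumption that every contract $(d,h,w)$ satisfies $w\le B_h$, we have $w_h(Y)=x^{(|X'|)}_W=\upperW_h\le B_h<1.5\cdot B_h$ right after the initialization. This is the one place that genuinely requires an argument --- the ``main obstacle,'' such as it is --- precisely because it is the only insertion that bypasses the guard; everywhere else the bound is enforced by the algorithm itself.

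For the inductive step I would observe that in the $i$th iteration a contract $x=x^{(i)}$ is appended to $Y$ only when the guard $w_h(Y\cup\{x\})<1.5\cdot B_h$ holds, which is exactly the statement that the insertion preserves the invariant; iterations that add nothing preserve it trivially. Hence upon termination $w_h(\Ch_h(X'))=w_h(Y)<1.5\cdot B_h\le 1.5\cdot B_h$. Note that, in contrast to the COM proof for the same algorithm, no case analysis or knapsack comparison is needed here, since the capacity $1.5\cdot B_h$ is maintained syntactically on $Y$ at all times except the first insertion, which the wage bound $\upperW_h\le B_h$ handles.
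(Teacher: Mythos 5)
Your proof is correct and matches the paper, which simply declares the lemma ``obvious by the definition of the choice functions''; you have merely made the implicit invariant explicit, correctly isolating the one non-syntactic step (the unguarded initial insertion of the maximum-wage contract, handled by the standing assumption $w\le B_h$) and observing that every other insertion is protected by the guard $w_h(Y\cup\{x\})<1.5\cdot B_h$.
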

This lemma is obvious by the definition of the choice functions. 
Thus, $B_h\le B_h'\le 1.5\cdot B_h$ holds for any $h\in H$.

We summarize the above arguments in the following theorem: 
\begin{theorem}
  The generalized DA mechanism with the choice functions defined in Algorithm~\ref{alg:proputil2}
  produces a $B_H'$-stable matching such that
  $B_h\le B_h'\le 1.5\cdot B_h$ for any $h\in H$ 
  when each hospital has utility over a set of contracts that is proportional to the total amount of wages. 
\end{theorem}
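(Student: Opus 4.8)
The plan is to obtain this theorem as an immediate corollary of the two lemmas just established for Algorithm~\ref{alg:proputil2}, combined with Theorem~\ref{thm:compatible2stable}. First I would invoke the preceding lemma, which asserts that for every hospital $h$ the choice function of Algorithm~\ref{alg:proputil2} is SUB, IRC, and COM. These are precisely the three hypotheses of Theorem~\ref{thm:compatible2stable}, so that theorem applies directly: the generalized DA (Algorithm~\ref{alg:generalizedDA}) run with these choice functions terminates and returns a matching $X'$ that is $B'_H$-stable, where $B'_h = \max\{B_h, w_h(X')\}$ for each $h \in H$. In particular $B'_h \ge B_h$ holds trivially, giving the lower side of the claimed bound.

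Next I would bound $w_h(X')$ from above. At termination, say at iteration $l$, we have $X' = Y^{(l)} = Z^{(l)} = \Ch_H(Y^{(l)})$, hence $X'_h = \Ch_h(Y^{(l)}_h)$ for every $h$. Applying the bound lemma above for Algorithm~\ref{alg:proputil2}, namely $w_h(\Ch_h(X'')) \le 1.5\cdot B_h$ for any input $X'' \subseteq X_h$, to $X'' = Y^{(l)}_h$ yields $w_h(X') = w_h(\Ch_h(Y^{(l)}_h)) \le 1.5\cdot B_h$. Therefore $B'_h = \max\{B_h, w_h(X')\} \le 1.5\cdot B_h$. Combining the two parts gives $B_h \le B'_h \le 1.5\cdot B_h$ for all $h$, with $X'$ being $B'_H$-stable, as required.

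The proof of the theorem itself is therefore routine; all the substance resides in the two lemmas it invokes. Among those, the delicate point is verifying COM for Algorithm~\ref{alg:proputil2}: since this algorithm first reserves the highest-wage contract and only then greedily fills up to capacity $1.5\cdot B_h$, one must run a case analysis on whether the reserved contract together with the greedily chosen ones drives the total wage past $B_h$, and argue in each case that the chosen set is optimal for some knapsack capacity at least $B_h$ (as COM demands). That case analysis is exactly what the preceding lemma's proof carries out, so at the level of this theorem there is nothing further to do beyond the assembly above.
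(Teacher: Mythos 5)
Your proposal is correct and matches the paper's (implicit) argument exactly: the paper states this theorem as a summary of the two preceding lemmas combined with Theorem~\ref{thm:compatible2stable}, which is precisely the assembly you carry out. Your additional observation that $X'_h=\Ch_h(Y^{(l)}_h)$ at termination, needed to apply the wage-bound lemma, is the right way to make that step explicit.
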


Second, we consider the case in which each hospital has the same utility across contracts. 
Formally, for every $h$, $X'\subseteq X_h$ and a constant $\gamma_h(>0)$, 
\[f_h(X')=\gamma_h\cdot |X'|\] holds. 
In this case, we obtain a strategy-proof mechanism that always produces
a conventional stable matching, which never violates the given budget constraints. 
The choice function greedily chooses contracts in increasing order of wage
until just before the total wage of the chosen contracts exceeds the constraint. 

\begin{algorithm}
\SetKwInOut{Input}{input}\Input{$X'\subseteq X_h$\quad\textbf{output:} $\Ch_h(X')$}
\caption{}\label{alg:equtil}
Initialize $Y\ot \emptyset$\;
Sort $X'$ in increasing order of wages\;
\For{$i=1,2,\dots,|X'|$}{
  let $x$ be the $i$th contract in $X'$\;
  \lIf{$w_h(Y\cup\{x\}) \le B_h$}{$Y\ot Y\cup\{x\}$}
}
\Return $Y$\;
\end{algorithm}

\begin{theorem}
The generalized DA with the choice functions defined in Algorithm \ref{alg:equtil} is strategy-proof and 
produces a stable matching.
\end{theorem}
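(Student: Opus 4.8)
The plan is to reduce the statement to the machinery already in place: I will check that the choice function $\Ch_h$ of Algorithm~\ref{alg:equtil} satisfies SUB, IRC, LAD, and COM for every hospital $h$, and then invoke Theorem~\ref{thm:compatible2stable} for stability and the Hatfield--Milgrom characterization recalled in Section~\ref{sec:preliminaries} for strategy-proofness. One preliminary observation drives everything: since the loop adds a contract only when the running total stays at most $B_h$, we have $w_h(\Ch_h(X'))\le B_h$ for all $X'\subseteq X_h$, so $\max\{B_h,w_h(\Ch_h(X'))\}=B_h$ everywhere; this is exactly what will let us conclude an honestly $B_H$-feasible (i.e.\ conventional) stable matching rather than a near-feasible relaxation. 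I will also fix, once and for all, a tie-breaking order refining ``increasing wage'' (say, breaking ties by doctor index), so that for each $X'$ the sorted sequence is unambiguous and $\Ch_h(X')$ is literally the longest prefix of that sequence whose total wage is $\le B_h$; since wages are nondecreasing along the sequence, once a contract is skipped every later one is skipped too, so this ``longest cheap prefix'' description is exact.

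With that description, SUB and LAD are immediate, argued essentially as in the proof of Lemma~\ref{lem:proputil}. For SUB, take $X''\subseteq X'\subseteq X_h$ and $x\in\Ch_h(X')\cap X''$: the contracts of $X'$ no more expensive than $x$ (in the fixed order) have total wage $\le B_h$, and intersecting with $X''$ only shrinks that set, so the corresponding prefix of $X''$ still fits and hence $x\in\Ch_h(X'')$; thus $\Ch_h(X')\cap X''\subseteq\Ch_h(X'')$, which is exactly $X''\setminus\Ch_h(X'')\subseteq X'\setminus\Ch_h(X')$. For LAD, a standard exchange argument shows $\Ch_h(X')$ is a \emph{maximum-cardinality} budget-feasible subset of $X'$: given any feasible $Y\subseteq X'$, the $|Y|$ globally cheapest contracts of $X'$ have total wage at most $w_h(Y)\le B_h$, so the cheapest-first greedy picks at least $|Y|$ contracts. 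Since every budget-feasible subset of $X''$ is also one of $X'$, this gives $|\Ch_h(X'')|\le|\Ch_h(X')|$. IRC then follows from SUB and LAD in the same way as in Lemma~\ref{lem:proputil}.

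For COM, fix $X''\subseteq X'\subseteq X_h$ with $w_h(X'')\le\max\{B_h,w_h(\Ch_h(X'))\}=B_h$. Because $f_h(Z)=\gamma_h\,|Z|$, we have $f_h(\Ch_h(X'))=\gamma_h\,|\Ch_h(X')|$, and by the maximality just established $|\Ch_h(X')|\ge|X''|$, hence $f_h(\Ch_h(X'))\ge f_h(X'')$. Having verified all four properties, Theorem~\ref{thm:compatible2stable} gives that the generalized DA returns a matching $X'$ that is $B'_H$-stable with $B'_h=\max\{B_h,w_h(X')\}$; but $w_h(X')\le B_h$ by the preliminary observation, so $B'_H=B_H$ and $X'$ is a conventional $B_H$-stable matching (in particular, one exists). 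Strategy-proofness for doctors then follows from the Hatfield--Milgrom result, since $\Ch_H=\bigcup_{h\in H}\Ch_h$ satisfies SUB, IRC, and LAD. The one genuinely substantive step is the exchange argument that cheapest-first greedy attains the maximum feasible cardinality, which is what makes COM automatic; everything else is bookkeeping, with the tie-breaking convention being the single place to be careful so that the ``prefix'' picture remains stable under passing to subsets.
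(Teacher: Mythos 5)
Your proposal is correct and follows essentially the same route as the paper's own proof: both rest on the observation that cheapest-first greedy selects a maximum-cardinality budget-feasible subset, which under $f_h(X')=\gamma_h|X'|$ makes $\Ch_h(X')$ an argmax of $f_h$ subject to the budget, whence SUB, IRC, LAD (hence strategy-proofness) and $B_H$-feasibility with no budget violation. You merely spell out the COM verification and the appeal to Theorem~\ref{thm:compatible2stable} more explicitly than the paper does, which is a welcome but not substantively different elaboration.
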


\begin{proof}
Since the choice functions greedily choose contracts in increasing order of wages,
we have
\[|\Ch_h(X')|=\max\{|X''|\mid X''\subseteq X',~w_h(X'')\le B_h\}\] 
for all $h$ and $X'\subseteq X_h$.
In other words,
\[\Ch_h(X')\in\argmax\{f_h(X'')\mid X''\subseteq X',~w_h(X'')\le B_h\}\] holds.
Thus, for each $h$ and $X'\subseteq X_h$,
$\Ch_h(X')$ picks the smallest \(\max\{|X''|\mid X''\subseteq X',~w_h(X')\le B_h\}\) contracts
according to the wage. 
Therefore, each $\Ch_h$ satisfies SUB, IRC, and LAD, 
and hence the mechanism is strategy-proof for doctors. 
Also, the mechanism clearly produces a $B_H$-feasible set of contracts. 
\end{proof}

\end{modified}

\section{Conclusion}
\label{sec:conclusion}
This paper examined how the matching with contract framework could be adjusted in the 
presence of budget constraints. Focusing on the idea of near-feasible matchings, 
our analysis established a useful characterization. 
To demonstrate this advantage, we proposed two novel mechanisms that 
return a near-feasible stable matching in polynomial time: 
one is strategy-proof and the other is not. 
Furthermore, we derived the increment bound of the budgets. 
Notably, the best possible bound is obtained by sacrificing strategy-proofness. 
 
%
While one might think that we could handle budget constraints together with maximum quota constraints,
it is not a simple process to design an appropriate choice function that handles both constraints simultaneously. 
Suppose, for example, that a hospital has a maximum quota of one and offers two contracts, 
one with a lower wage than the other. 
The hospital has lower utility for the low wage contract than the high wage one and, conversely, 
it has higher utility per unit wage for the former than the latter. 
Thus, the unique stable matching admits the high wage contract only. 
%
%
If the choice function additionally checks whether the current number of 
chosen contracts exceeds the maximum quota in line 5 in Algorithm~\ref{alg:nspchoice}, 
it chooses the low wage contract 
and fails to provide the stable matching. 
In general, this problem is known as a \textit{cardinality constrained knapsack problem}~\cite{capara2000,kellerer2004kp}.
The question remains whether building upon techniques for this problem 
we can construct a proper choice function. 
%
 
\section*{Acknowledgment}
This paper is based on the authors' conference publication~\cite{kawase:ijcai:2017}. It has been generalized to the assumption that each hospital's preference is represented by the weak relation over the subsets of contracts in Section~\ref{sec:com} and expanded to include two special cases in Section~\ref{sec:special cases}. Also, we added an impossibility example in Section~\ref{sec:impossibility} and placed the detailed proof of the intractability result in Section~\ref{sec:appendix}, and discuss doctor-side optimality in Section~\ref{sec:doctor-optimality}. 
This work was partially supported by JSPS KAKENHI Grant Number 26280081, 16K16005, and 17H01787.
%

\bibliographystyle{abbrv}
\bibliography{approxmatching} 

\appendix
\section{Proof of Theorem~\ref{thm:lowerapprox}}
\label{sec:appendix}

\begin{proof}
\begin{modified}
  Let $m$ be a positive integer larger than $1/(\beta-\alpha)+1/(1-\beta)$.
  We consider a market with $m^2$ doctors
  \begin{align*}
    D=\{d^*,d_1^0,\dots,d_1^m,d_2^0,\dots,d_2^m,\dots,d_{m-1}^0,\dots,d_{m-1}^m\}
  \end{align*}
  and $m$ hospitals $H=\{h_1,h_2,\dots,h_m\}$. 
  Figure~\ref{fig:lowerapprox} illustrates the market with some doctors' preferences and hospitals' utilities. 
  The set of offered contracts is a union of 
  \begin{align*}
    X_{d^*}&=\{(d^*,h_m,\beta)\},\\
    X_{d_i^0}&=\{(d_i^0,h_m,1/m), (d_i^0,h_i,\beta)\}\ \text{for all}\ i\in [m-1],\\
    X_{d_i^j}&=\{(d_i^j,h_i,\tfrac{1-\beta}{m-1})\}\ \text{for all}\ i,j\in [m-1],\ \text{and} \\
    X_{d_i^m}&=\{(d_i^m,h_i,\beta),(d_i^m,h_m,1/m)\}\ \text{for all}\ i\in [m-1], 
  \end{align*} 
  where $[m-1]$ indicates $\{1,2,\dots,m-1\}$.
  We assume that the preferences of the doctors are 
  \begin{align*}
    \succ_{d^*}:&~(d^*,h_m,\beta),\\
    \succ_{d_i^0}:&~(d_i^0,h_m,1/m)\succ_{d_i^0}   (d_i^0,h_i,\beta)\quad \text{for all}\ i\in [m-1],\\
    \succ_{d_i^j}:&~(d_i^j,h_i,\tfrac{1-\beta}{m-1})\ \text{for all}\quad i,j\in [m-1],\ \text{and}\ \\
    \succ_{d_i^m}:&~(d_i^m,h_i,\beta)\succ_{d_i^m} (d_i^m,h_m,1/m)\ \text{for\ all}\ i\in [m-1].
  \end{align*}
  The utilities of each hospital are 
  \begin{align*}
    f_{h_m}((d^*,h_m,\beta))&=1,\\
    f_{h_m}((d_i^0,h_m,1/m))&=2^{-i}\quad \text{for all}\ i\in [m-1],\\
    f_{h_i}((d_i^0,h_i,\beta))&=2^m\quad \text{for all}\ i\in [m-1],\\
    f_{h_i}((d_i^j,h_i,\tfrac{1-\beta}{m-1}))&=2^{m-j}\quad \text{for\ all}\ i,j\in [m-1],\\
    f_{h_i}((d_i^m,h_i,\beta))&=1\quad \text{for all}\ i\in [m-1],\ \text{and}\\
    f_{h_m}((d_i^m,h_m,1/m))&=2^{m-i}\quad \text{for all}\ i\in [m-1], 
  \end{align*}
  assuming they are additive. 
  Every hospital has the fixed budget $1$~($B_{h_1}=\dots=B_{h_m}=1$). 
  The condition $\upperW_h\le\beta\cdot B_h$ holds since $1/m\le\beta=\beta\cdot B_h$ and $(1-\beta)/(m-1)\le\beta=\beta\cdot B_h$.
  The market $(D,H,X,\succ_D,f_H,B_H)$ is depicted in Figure \ref{fig:lowerapprox}.

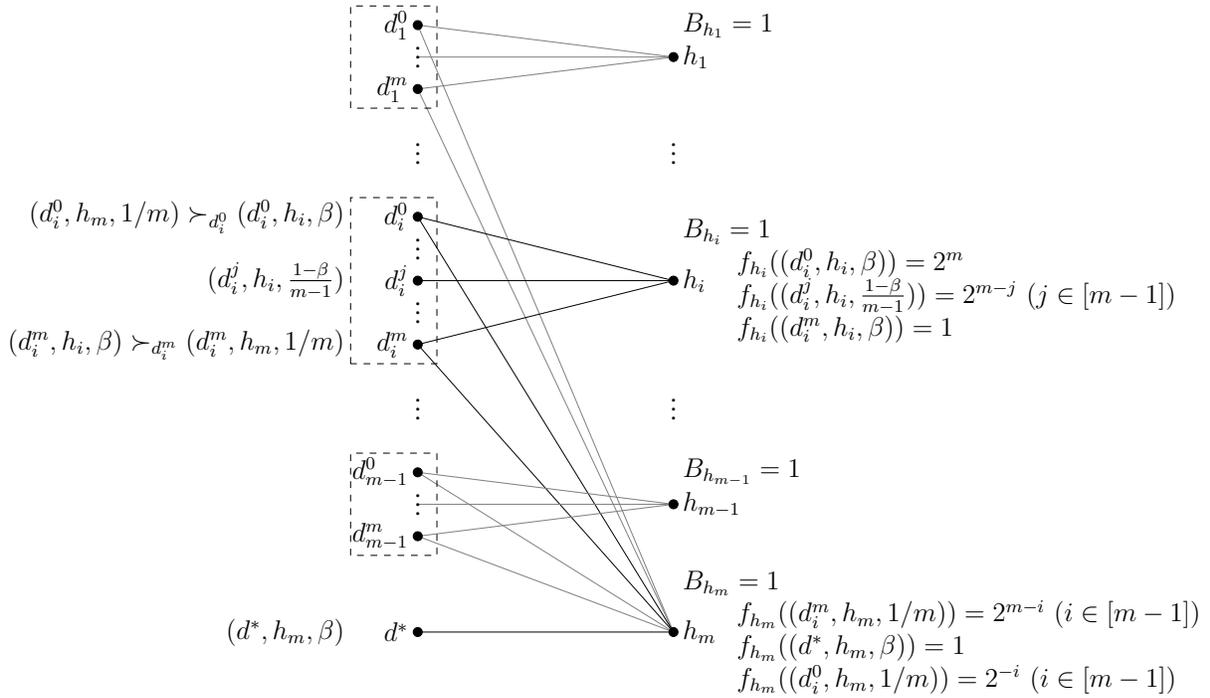
\begin{figure}[t]
\begin{center}
\scalebox{0.85}{
\begin{tikzpicture}[main node/.style={circle,draw,inner sep=1pt}]
\draw[ultra thin,gray] (2,9) -- (6,-0.5);
\draw[ultra thin,gray] (2,8) -- (6,-0.5);
\draw[] (2,6) -- (6,-0.5);
\draw[] (2,4) -- (6,-0.5);
\draw[ultra thin,gray] (2,2) -- (6,-0.5);
\draw[ultra thin,gray] (2,1) -- (6,-0.5);
\draw[] (2,-0.5) -- (6,-0.5);

\draw[ultra thin,gray] (2,8) -- (6,8.5);
\draw[ultra thin,gray] (2,8.5) -- (6,8.5);
\draw[ultra thin,gray] (2,9) -- (6,8.5);

\draw[] (2,4) -- (6,5);
\draw[] (2,5) -- (6,5);
\draw[] (2,6) -- (6,5);

\draw[ultra thin,gray] (2,1) -- (6,1.5);
\draw[ultra thin,gray] (2,1.5) -- (6,1.5);
\draw[ultra thin,gray] (2,2) -- (6,1.5);

\draw [dashed] (.95,7.7) rectangle (2.3,9.3);
\filldraw[] (2,9) circle (2pt) node[left,black] {$d_{1}^0$};
\node at (2,8.6) {$\vdots$};
\filldraw[] (2,8) circle (2pt) node[left,black] {$d_{1}^m$};
  
\node at (2,7.1) {$\vdots$};
\draw [dashed] (.95,3.7) rectangle (2.3,6.3);
\filldraw[] (2,6) circle (2pt) node[left,black] {$d_{i}^0$};
\node[left] at (1,6) {$(d_i^0,h_m,1/m)\succ_{d_i^0}(d_i^0,h_i,\beta)$};
\node at (2,5.6) {$\vdots$};
\filldraw[] (2,5) circle (2pt) node[left,black] {$d_{i}^j$};
\node[left] at (1,5) {$(d_i^j,h_i,\frac{1-\beta}{m-1})$};
\node at (2,4.6) {$\vdots$};
\filldraw[] (2,4) circle (2pt) node[left,black] {$d_{i}^m$};
\node[left] at (1,4) {$(d_i^m,h_i,\beta)\succ_{d_i^m}(d_i^m,h_m,1/m)$};

\node at (2,3.1) {$\vdots$};
\draw [dashed] (.95,0.7) rectangle (2.3,2.3);
\filldraw[] (2,2) circle (2pt) node[left,black] {$d_{m-1}^0$};
\node at (2,1.6) {$\vdots$};
\filldraw[] (2,1) circle (2pt) node[left,black] {$d_{m-1}^m$};

\filldraw[] (2,-0.5) circle (2pt) node[left,black] {$d^*$};
\node[left] at (1,-0.5) {$(d^*,h_m,\beta)$};

\filldraw[] (6,8.5) circle (2pt) node[right,black] {$h_1$};
\node[right] at (6,9) {$B_{h_{1}}=1$};
\node at (6,7.1) {$\vdots$};
\filldraw[] (6,5) circle (2pt) node[right,black] {$h_i$};
\node[right, text width=9cm] at (6,5) {$B_{h_{i}}=1$\\ \qquad$f_{h_i}((d_i^0,h_i,\beta))=2^m$\\ \qquad$f_{h_i}((d_i^j,h_i,\frac{1-\beta}{m-1}))=2^{m-j}$ $(j\in [m-1])$\\ \qquad$f_{h_i}((d_i^m,h_i,\beta))=1$};
\node at (6,3.1) {$\vdots$};
\filldraw[] (6,1.5) circle (2pt) node[right,black] {$h_{m-1}$};
\node[right] at (6,2) {$B_{h_{m-1}}=1$};

\filldraw[] (6,-0.5) circle (2pt) node[right,black] {$h_m$};
\node[right, text width=9cm] at (6,-0.5) {$B_{h_m}=1$\\ \qquad$f_{h_m}((d_i^m,h_m,1/m))=2^{m-i}$ $(i\in [m-1])$\\ \qquad$f_{h_m}((d^*,h_m,\beta))=1$\\ \qquad$f_{h_m}((d_i^0,h_m,1/m))=2^{-i}$ $(i\in [m-1])$};

\end{tikzpicture}}
\end{center}
\caption{Example of a market with no $B_H'$-stable matching such that $B_h\le B_h'\le (1+\alpha)\cdot B_h$ for all $h\in H$.} \label{fig:lowerapprox}
\end{figure}
  
We are going to show that there is no $B_H'$-stable matching 
such that $B_h\le B_h'\le (1+\alpha)\cdot B_h$ for all $h\in H$ by contradiction. 
Let us assume that $X'$ is a $B_H'$-stable matching, namely,
a stable matching for a market $(D,H,X,\succ_D,f_H,B_H')$ such that $B_h\le B_h'\le (1+\alpha)B_h$ for all $h\in H$. 
    
  First, let us consider a case where, for all $i\in [m-1]$, 
  doctor $d_i^0$ is assigned to $h_m$ with $1/m$, i.e., 
  \[
  Y_m^0\equiv\{(d_1^0,h_m,1/m), \ldots, (d_{m-1}^0,h_m,1/m) \}\subseteq X'. 
  \]
  In this case, $d^*$ must be assigned to $h_m$ because we assume $X'$ is $B'_H$-stable. 
  Specifically, doctor $d^*$ prefers $h_m$ to being unmatched. 
  Hospital $h_m$ prefers the contract with $d^*$, whose utility is one, to $Y_m^0$, 
  whose utility is $\sum_{i=1}^{m-1}2^{-i}$. 
  In addition, 
  the wage to $d^*$, that is, $\beta$, is smaller than the total wages to $Y_m^0$, that is, $(m-1)/m$. 
  Thus, unless $d^*$ is assigned to $h_m$ with $\beta$, she can form a blocking coalition. 
  Since $X'$ contains at least $(d^*,h_m,\beta)$ and $Y_m^0$, we derive 
  \begin{align*}
    w_{h_m}(X') & \ge \beta+(m-1)/m\\ 
                & > 1+\beta -\frac{(\beta-\alpha)(1-\beta)}{1-\alpha} \\
                & > (1+\beta)-(\beta-\alpha) = (1+\alpha) 
  \end{align*}
  from the assumptions of $m$, $\alpha$, and $\beta$. 
  Thus, $w_{h_m}(X')$ is strictly greater than $(1+\alpha)B_{h_{m}}$, 
  contradicting that $X'$ is $B_H'$-feasible, which is implied by $B'_H$-stability. 

  Second, let us consider another case where, for some $i\in [m-1]$, 
  $d_i^0$ is not assigned to $h_m$, i.e., $Y_m^0\not\subseteq X'$. 
  In this case, $d_i^0$ must be assigned to $h_i$. 
  To illustrate this, it is sufficient to consider a situation where $(d_i^m,h_i,\beta)$ and 
  \[
  Y_i \equiv \{(d_i^1,h_i,\tfrac{1-\beta}{m-1}), \ldots, (d_{i}^{m-1},h_i,\tfrac{1-\beta}{m-1}) \}
  \] 
  are chosen by $h_i$, i.e., $\{(d_i^m,h_i,\beta)\}\cup Y_i\subseteq X'$. 
  Hospital $h_i$ obtains the utility of $1 + \sum_{i=1}^{m-1}2^{m-i}~(=2^m-1)$ on the assignment. 
  On the other hand, if $d_i^0$ is assigned to $h_i$ with $\beta$, $h_i$ obtains the utility of $2^m$. 
  Evidently, $h_i$ prefers $(d_i^0,h_i,\beta)$ to $\{(d_i^m,h_i,\beta)\}\cup Y_i$ 
  and $d_i^0$ can form a blocking coalition unless she is assigned to $h_i$. 
  %
  To consider a set of doctors who are not assigned to $h_m$, 
  we introduce a set of indexes
  \[
  I=\{i\in[m-1]\mid (d_i^0,h_m,1/m)\not\in X'\}.
  \]
  By the assumption, $I$ is not the empty set. 
  In what follows, we concentrate on matchings where doctor $d_i^0$, for all $i\in I$, is assigned to $h_i$ instead of $h_m$. 
  Note that $d_i^m$ for $i\in [m-1]\setminus I$ must be assigned to $h_i$ 
  because, if $d_i^m$ is not assigned to $h_i$, 
  $\{(d_i^m,h_i,\beta)\}\cup Y_i$ is a blocking coalition. 
  In addition, $d_i^m$ for $i\in I$ must be assigned to $h_i$ or $h_m$ 
  because, if $d_i^m$ is unmatched, 
  $X_h\setminus\{(d^*,h_m,\beta)\}\cup \{(d_i^m,h_m,1/m)\}$ is a blocking coalition.

  Let us first examine a case where, for all $i\in I$, $d_i^m$ are assigned to $h_m$. 
  In this case, $d^*$ must be assigned to $h_m$. 
  Unless $d^*$ is assigned to $h_m$, $X''=X_{h_m}'\cup\{(d_i^0,h_m,1/m)\}$ is a blocking coalition for $i\in I$. 
  This is because $d_i^0$ prefers $h_m$ the most and $w_{h_m}(X'')=1$. 
  Then, since $X'_{h_m}$ contains 
  \begin{align*}
    \{ (d_{i}^0,h_m,1/m) \mid i\in [m-1]\setminus I \},\  
    \{ (d_i^m,h_m,1/m) \mid i\in I \},\ \text{and}\ (d^*,h_m,\beta), 
  \end{align*}
  we derive $w_{h_m}(X')=(m-1)/m+\beta$, which is strictly greater than $1+\alpha$
  from the assumptions of $m$, $\alpha$, and $\beta$. 
  Thus, $w_{h_m}(X')$ is strictly greater than $(1+\alpha)B_{h_{m}}$, which contradicts that $X'$ is $B'_H$-stable. 

  Next, let us examine another case where, for some $i\in I$, doctor $d_i^m$ is assigned to $h_i$. 
  In this case, $Y_i$ 
  must be chosen by $h_i$ because, if $d_i^j$ is not assigned to $h_i$ for some $j\in[m-1]$, 
  $X_{h_i}'\setminus \{(d_i^m,h_i,\beta)\}\cup\{(d_i^j,h_i,\frac{1-\beta}{m-1})\}$ is a blocking coalition. 
  In fact, doctor $d_i^j$ prefers $h_i$ to being unmatched. 
  Hospital $h_i$ prefers $(d_i^j,h_i,\frac{1-\beta}{m-1})$, whose utility is $2^{m-j}~(\ge 2)$, to $(d^m_i,h_i,\beta)$, whose utility is $1$. 
  In addition, the wage to $d_i^j$, that is, $\frac{1-\beta}{m-1}$, is smaller than the wage to $d^m_i$, that is, $\beta$. 
  Then, since $X'_{h_i}$ contains $(d_i^0,h_i,\beta)$, $(d_i^m,h_i,\beta)$, and $Y_i$, 
  we derive $w_{h_i}(X') = 1 + \beta > 1 + \alpha$ from the assumptions of $m$, $\alpha$, and $\beta$. 
  Thus, $w_{h_i}(X')$ is strictly greater than $(1+\alpha)B_{h_{i}}$, contradicting that $X'$ is $B'_H$-stable. 
\end{modified}
\end{proof}

\end{document}